\newcommand\opn[1]{\operatorname{#1}}
\newcommand\NN{\mathbb N}
\newcommand\mc[1]{\mathcal{#1}}
\renewcommand\H{\mc H}
\newcommand\lft{1}
\newcommand\rgt{2}
\newcommand\Hl{\mc H_{\lft}}
\newcommand\Hr{\mc H_{\rgt}}
\newcommand\bs[1]{\boldsymbol{#1}}
\newcommand\iiff{\quad\iff\quad}
\newcommand\lift{\opn{lift}}
\newcommand\liftofrom[2]{\lift_{#1\ot#2}}
\newcommand\id{\mathbf{I}}
\newcommand\other{\text{literature}}
\newcommand\ot{\leftarrow}
\newcommand\into\hookrightarrow
\newcommand\GS{\mc Z}
\newcommand\Z{\GS}
\newcommand\double[1]{\mathbb{#1}}
\newcommand\CC{\double{C}}
\newcommand\V{\mc V}
\newcommand\Y{\mc Y}
\newcommand\onto[2]{{\mu(\tfrac{#1}{#2})}}
\newcommand\spec{\opn{spec}}
\newcommand\subsp{\preceq}
\newcommand\supsp{\succeq}
\newcommand\opge\supsp
\newcommand\erra{\varphi}
\newcommand\ople\subsp
\newcommand\plusbounded{\lesssim}
\newcommand\aand{\quad\text{and}\quad}
\newcommand\sphere{\mc S}
\newcommand\vdim[1]{\opn{dim}(#1)}
\newcommand\ONE{1\kern-1.0ex\opn{I}}
\newcommand\shrink{\Delta}
\renewcommand\P{\opn{P}}
\newcommand\nofrac[2]{#1\big/#2}
\newcommand\xpt\backslash
\newcommand\IN[1]{\Gamma_{#1}^\dag}
\newcommand\ON[1]{\Gamma_{#1}}
\newcommand\tofrom[2]{\Pi_{#1\ot\!#2}}
\newcommand\W{\mc W}
\newcommand\B{\mathcal B}
\newcommand\AGSP{{K}}
\newcommand\agsp{{K}}
\newcommand\emp[1]{\textbf{\textit{#1}}}
\newcommand\Bl{\B(\Hl)}
\newcommand\Br{\B(\Hr)}
\newcommand\PAP{\mc K}
\newcommand\bra[1]{\langle#1|}
\newcommand\shann{\opn{S}}
\newcommand\ket[1]{|#1\rangle}
\newcommand\bracket[2]{\langle#1|#2\rangle}
\newtheorem{lemma}{Lemma}
\numberwithin{lemma}{section}
\newtheorem{theorem}[lemma]{Theorem}
\newtheorem*{lemma*}{Lemma}
\newtheorem{proposition}[lemma]{Proposition}
\newtheorem{observation}[lemma]{Observation}
\newtheorem{claim}[lemma]{Claim}
\newtheorem{definition}[lemma]{Definition}
\newtheorem*{definition*}{Definition}
\newtheorem{corollary}[lemma]{Corollary}
\author{Nilin Abrahamsen}
\title{Sharp implications of AGSPs for degenerate ground spaces}
\begin{document}

\affil{\small{Department of Mathematics,\\Massachusetts Institute of Technology,\\Cambridge, MA, USA}}

\maketitle

\begin{abstract}We generalize the `off-the-rack' \emph{AGSP$\Rightarrow$entanglement bound} implication of [Arad, Landau, and Vazirani '12] from unique ground states to degenerate ground spaces. Our condition $R\shrink\le1/2$ on a $(\shrink,R)$-AGSP matches the non-degenerate case, whereas existing tools in the literature of spin chains would only be adequate to prove a less natural implication which assumes $R^{\text{Const}}\shrink\le c$. To show that $R\shrink\le1/2$ still suffices in the degenerate case we prove an optimal error reduction bound which improves on the literature by a factor $\delta\mu$ where $\delta=1-\mu$ is the viability. 

The generalized off-the-rack bound implies the generalization of a recent 2D subvolume law of [Anshu, Arad, and Gosset '19] from the non-degenerate case to the sub-exponentially degenerate case.\end{abstract}

\section{Introduction}
Approximate ground space projectors (AGSPs) are an indispensable tool for proving entanglement bounds on ground states of gapped local Hamiltonians \cite{arad_improved_2012,arad_area_2013,anshu_entanglement_2019} and for constructing polynomial-time algorithms \cite{arad_rigorous_2017} for 1D Hamiltonians. In such results it is often assumed that the ground state be \emph{unique} \cite{arad_improved_2012,arad_area_2013,anshu_entanglement_2019}. A main reason for the ubiquitousness of this assumption is that for unique ground states, the existence of a $(\shrink\!=\!\frac1{2R},\:R)$-AGSP implies a readymade entanglement bound $O(\log R)$ by a lemma of Arad, Landau, and Vazirani (\cite{arad_improved_2012} corollary III.4). This fact, which we call the \emph{off-the-rack}\footnote{By `off-the-rack' we mean that the bound follows from a single condition and does not require a case-specific analysis outside of verifying this condition.} (OTR) bound, reduces the task of proving an area law to that of constructing such an AGSP in the non-degenerate setting.

Generalizing entanglement bounds and algorithms for 1D gapped Hamiltonians from the setting of a unique ground state to a ground space with \emph{degeneracy} (i.e., dimension) $D>1$ has been a main focus of several works, starting with the case of a \emph{constant} degeneracy \cite{chubb_computing_2016,huang_area_2014} and later generalized further to \emph{polynomial} degeneracy \cite{arad_rigorous_2017}.

While AGSPs have been used before to prove a 1D area law for polynomially degenerate ground spaces \cite{arad_rigorous_2017}, no direct analogue of the OTR bound follows using only existing tools and analyses. Indeed, inspecting the state-of-the-art proofs of degenerate-case entanglement bounds one finds the necessary assumption on a $(\shrink,R)$-AGSP to be $R^C\shrink\le1/2$ (where one can take $C=12$ \cite{arad_rigorous_2017}). Here $R$ is the entanglement rank of the AGSP and $\shrink$ the shrinking factor. This discrepancy with the non-degenerate case is somewhat unsatisfactory\footnote{A typical analysis, say, for 1D area laws, goes by showing $\nofrac{\log(\shrink^{-1})}{\log R}\to\infty$ which implies a bound on $R^C\shrink$ just as it does on $R\shrink$. Thus, our focus on ensuring that $R$ appears with exponent $1$ in proposition \ref{mainres} may be said to be of little practical consequence. And indeed we pursue this goal mainly for aesthetics.}, and in particular a bound on $R^C\shrink$ does not follow from the OTR assumption $R\shrink\le 1/2$ by any amplification procedure (amplification instead gives a bound on $R^C\shrink^C$).

\subsection{Our contribution}
We generalize the off-the-rack entanglement bound of \cite{arad_improved_2012} to degenerate ground spaces with no strengthening of the assumed parameter tradeoff:

\begin{proposition}\label{mainres}
	Suppose there exists an $(\shrink,R)$-AGSP $\AGSP\in\Hl\otimes\Hr$ such that
	\[R\shrink\le1/2.\]
	Let $\GS$ be the target space of $\AGSP$ and $D=\vdim\GS$ its degeneracy. Then the maximum entanglement entropy of any state $\ket\psi\in\GS$ satisfies the bound
	\[\max_{\ket\psi\in\sphere(\GS)}\shann(\rho^\psi_\lft)=1.01\cdot\log D+O(\log R),\]
	where $\sphere(\GS)$ is the set of unit vectors in $\GS$ and $\shann(\rho^\psi_\lft)$ is the entanglement entropy of $\ket\psi$ between subsystems $\Hl$ and $\Hr$.
\end{proposition}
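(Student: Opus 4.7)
The plan is to adapt the Arad--Landau--Vazirani truncation bootstrap \cite{arad_improved_2012} from a single ground state to the entire ground space $\GS$, by working with viability $\symviab{\phi}{\GS}$ throughout. Fix a maximally entangled unit vector $\ket\psi\in\sphere(\GS)$ with Schmidt coefficients $\lambda_1\ge\lambda_2\ge\cdots$ across the $\Hl\otimes\Hr$ cut. The proof reduces to producing, for each $t\ge 0$, a state $\ket{\xi_t}$ of Schmidt rank $r_t\lesssim DR^t$ with $\|\ket{\xi_t}-\ket\psi\|^2\le 2^{-t}$; a standard Eckart--Young/``height'' argument then converts this into the tail bound $\sum_{i>r_t}\lambda_i^2\lesssim 2^{-t}$, and dyadic summation of $\lambda_i^2\log(1/\lambda_i^2)$ yields $\shann(\rho_\lft^\psi)\le 1.01\log D+O(\log R)$. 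The prefactor $1.01$ (rather than $1$) comes from the small slack needed in the initial Schmidt rank: $r_0$ must be taken slightly above $D$ rather than equal to $D$.

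The approximants $\ket{\xi_t}$ are built inductively. For the base case I take $\ket{\xi_0}$ to be the Schmidt truncation of $\ket\psi$ itself to its top $\lesssim D^{1+o(1)}$ components, so that $\symonto{\xi_0}{\GS}$ is bounded away from zero (trivially, since $\ket\psi\in\GS$). The inductive step applies the $(\shrink,R)$-AGSP $\agsp$: this multiplies the Schmidt rank by at most $R$, and by the sharp error-reduction bound advertised in the abstract the viability contracts by a factor of $\shrink$ times the $\delta\mu$ improvement --- strictly better than $\shrink$ alone. Under $R\shrink\le 1/2$, iterating for $t$ rounds simultaneously keeps $r_t\lesssim DR^t$ and drives $\symviab{\xi_t}{\GS}\le 2^{-t}$. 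Finally, that ``small viability relative to $\GS$'' implies ``close to $\ket\psi$ specifically,'' rather than merely close to some element of $\GS$, follows from the fact that the $\GS$-projection of $\ket{\xi_0}$ is proportional to $\ket\psi$, a property preserved under each application of $\agsp$.

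The main obstacle is establishing the sharp error-reduction bound with its $\delta\mu$ improvement factor. The naive analysis decomposes $\ket\xi=\ket{\xi^\parallel}+\ket{\xi^\perp}$ with $\ket{\xi^\parallel}\in\GS$ and $\ket{\xi^\perp}\perp\GS$, uses $\agsp\ket{\xi^\parallel}=\ket{\xi^\parallel}$ together with $\|\agsp\iP_\GS^\perp\|^2\le\shrink$, and lower-bounds $\|\agsp\ket\xi\|^2$ by $\|\ket{\xi^\parallel}\|^2$; this yields only the weaker literature bound that would force the stronger assumption $R^C\shrink\le c$ rather than $R\shrink\le 1/2$. Extracting the extra $\delta\mu$ factor requires carefully tracking the cross term $\opn{Re}\bra{\xi^\parallel}\agsp^2\ket{\xi^\perp}$ in both the denominator $\|\agsp\ket\xi\|^2$ and the numerator $\|\iP_\GS^\perp\agsp\ket\xi\|^2$ of the viability ratio, and showing that the signs combine favourably. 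I anticipate this to be a tight $2\times 2$ spectral estimate analogous to the Davis--Kahan $\sin\theta$ inequality, and it is exactly this refinement that lets the degenerate case match the non-degenerate off-the-rack threshold $R\shrink\le 1/2$ without any amplification-unfriendly exponent on $R$.
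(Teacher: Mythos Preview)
Your base case is circular. You take $\ket{\xi_0}$ to be the Schmidt truncation of $\ket\psi$ to its top $r_0\lesssim D^{1+o(1)}$ terms and assert that its overlap with $\GS$ is ``trivially'' bounded away from zero because $\ket\psi\in\GS$. But the overlap of that truncation with $\ket\psi$ equals $\sum_{i\le r_0}\lambda_i^2$, and there is \emph{no a priori lower bound} on this partial sum --- bounding it is precisely the content of the proposition. With an uncontrolled initial error ratio $\erra_0$, direct iteration gives $\erra_t\le\shrink^t\erra_0$ and $r_t\le r_0R^t$; you cannot force $\erra_t\le1$ until $t\gtrsim\log_{1/\shrink}\erra_0$, at which point $r_t$ already depends on the unknown $\erra_0$, and the resulting entropy bound is not absolute.

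The paper closes this gap by replacing direct iteration with a \emph{bootstrap} on subspaces: take $\V\subset\Hl$ of \emph{minimal} dimension among left subspaces with overlap $\ge\nu=\tfrac1{32R}$ onto $\GS$; apply the AGSP once to reach overlap $\tfrac12$ (here the sharp bound $\erra'\le\shrink\erra$ is used in the regime $\mu\approx\shrink$, which is exactly where it beats $\delta'\le\shrink/\mu^2$); then apply a random dimension-reduction step to return to overlap $\nu$ with dimension $\le\tfrac12\dim\V+O(D\log R)$. Minimality forces $\dim\V=O(D\log R)$ with no assumption on an initial overlap --- the fixed point of the bootstrap supplies it. Two smaller issues: the paper works with subspaces $\V\subset\Hl$ throughout, since a single vector cannot $\mu$-overlap a $D$-dimensional $\GS$ for any $\mu>0$ when $D>1$; and your claim that ``$\P_\GS\ket{\xi_t}\propto\ket\psi$ is preserved by $\AGSP$'' needs $\AGSP|_\GS\propto\id_\GS$, which Definition~\ref{AGSPdef} does not impose. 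Finally, the ``$\delta\mu$ improvement'' in the abstract is the factor by which $\erra'\le\shrink\erra$ improves on the literature bound $\delta'\le\shrink/\mu^2$; it is not an additional contraction factor beyond $\shrink$.
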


\paragraph{Approximate target spaces for frustrated Hamiltonians.} In the case of a frustrated Hamiltonian the typical AGSP contruction involves spectral \emph{truncations} of parts of the Hamiltonian, incurring an error in the target space of the AGSP. We therefore also prove a version (lemma \ref{frversion}) of proposition \ref{mainres} which is applicable to the frustrated case by allowing the target space to be approximate.

To obtain proposition \ref{mainres} we prove the {optimal bound} $\frac{\delta'}{\mu'}\le\shrink\frac\delta\mu$ on the error reduction from an AGSP (lemma \ref{sharplemma}), improving the best bound in the literature by a factor $\delta\mu$. Here $\mu$ represents the \emph{overlap} and $\delta$ the \emph{viability error} \cite{arad_rigorous_2017}, and symbols with a prime correspond to parameters after applying the AGSP. The improved error reduction bound is essential when we apply the \emph{bootstrapping} argument \cite{arad_improved_2012,arad_area_2013,arad_rigorous_2017} to finish the proof of the entanglement bound.

\subsection{Sharp error reduction bound}
Given a Hilbert space $\H$ consider two subspaces $\GS,\V\subsp\H$ such that $\V$ \emph{covers} $\GS$, meaning the projection $\P_\GS$ onto $\GS$ is surjective even when restricted to $\V$, $\P_\GS(\V)=\GS$. Denoting the largest \emph{principal angle} \cite{galantai_jordans_2006,ben-israel_geometry_1967} between $\GS$ and $\P_\V(\GS)\subsp\V$ as $\theta$ we define the \emp{error ratio} $\erra$ of $\V$ onto $\GS$ as
\[\erra:=\tan^2\theta.\]
The \emph{overlap} $\mu=\min_{\ket z\in\sphere(\Z)}\bra z\P_\V\ket z$ and \emph{viability} error $\delta=1-\mu$ \cite{arad_rigorous_2017} of $\V$ onto $\Z$ coincide with $\mu=\cos^2\theta$ and $\delta=\sin^2\theta$, so the error ratio is equivalently characterized as $\delta/\mu$. 
\begin{lemma}[Sharp error reduction]\label{sharplemma}
Let $\AGSP$ be a $\shrink$-AGSP for $\GS\subsp\H$, and suppose $\V\subsp\H$ covers $\GS$ with error ratio $\erra$. then $\V':=\AGSP(\V)=\{\AGSP\ket v:\ket v\in\V\}$ covers $\GS$ with and the error ratio $\erra'$ of $\V'$ onto $\GS$ satisfies
\[\erra'\le\shrink\cdot\erra.\]
\end{lemma}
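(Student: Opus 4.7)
The plan is to parameterize the covering subspace so that $\erra$ becomes the squared operator norm of a single linear map $T:\GS\to\GS^\perp$, after which the shrinking under $\AGSP$ is essentially a submultiplicative bound.

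First, since $\V$ covers $\GS$, the subspace $\V_0:=\P_\V(\GS)\subsp\V$ has dimension $\vdim{\GS}$ and $\P_\GS|_{\V_0}:\V_0\to\GS$ is a bijection (surjectivity by covering; injectivity follows because $\ket z\in\V^\perp\cap\GS$ would contradict $\bracket{z}{\P_\V\ket z}>0$). Hence $\V_0$ is the graph of a unique operator $T:\GS\to\GS^\perp$, i.e.\ $\V_0=\{\ket z+T\ket z:\ket z\in\GS\}$. A minimization of $\|\ket z-\ket v\|^2$ over $\ket v\in\V_0$ then gives the explicit formula $\|\P_\V\ket z\|^2=\bra z(\id+T^\dag T)^{-1}\ket z$ for $\ket z\in\GS$, whence $\mu=1/(1+\|T\|^2)$ and therefore $\erra=\delta/\mu=\|T\|^2$ (operator norm squared).

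Second, I invoke the AGSP properties: a standard (Hermitian) $\shrink$-AGSP $\AGSP$ satisfies $\AGSP|_\GS=\id$ and preserves $\GS^\perp$, with $\|\AGSP\ket\phi\|^2\le\shrink\|\phi\|^2$ for $\ket\phi\in\GS^\perp$. Applying $\AGSP$ to the canonical representative $\ket{v_z}:=\ket z+T\ket z\in\V_0$ yields $\AGSP\ket{v_z}=\ket z+\AGSP T\ket z\in\V'$, again a graph vector, which in passing shows that $\V'$ covers $\GS$. Using $\ket{v_z}$ as a test element,
\[\|\P_{\V'}\ket z\|^2\;\ge\;\frac{|\bra z\AGSP\ket{v_z}|^2}{\|\AGSP\ket{v_z}\|^2}\;=\;\frac{1}{1+\|\AGSP T\ket z\|^2}\;\ge\;\frac{1}{1+\shrink\|T\ket z\|^2},\]
and minimizing the right-hand side over unit $\ket z\in\GS$ produces $\mu'\ge1/(1+\shrink\|T\|^2)=1/(1+\shrink\erra)$, equivalently $\erra'=1/\mu'-1\le\shrink\erra$.

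The main obstacle is the initial identification $\erra=\|T\|^2$; once that reformulation is available, the proof reduces to the one-line submultiplicative bound $\|\AGSP T\ket z\|^2\le\shrink\|T\ket z\|^2$ (the AGSP shrinking applied to $T\ket z\in\GS^\perp$) together with a short overlap computation. A minor care point is that $\ket z+\AGSP T\ket z$ is genuinely a graph vector, which requires $\AGSP(\GS^\perp)\subsp\GS^\perp$; this is automatic in the Hermitian setting standard for AGSPs.
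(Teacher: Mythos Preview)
Your argument is correct and is essentially the paper's \emph{alternative} proof in section~\ref{liftingversion}: your graph map $T$ is precisely $\P_{\GS^\perp}\liftofrom\V\GS$ (since $\liftofrom\V\GS\ket z$ is the unique vector in $\V_0=\P_\V(\GS)$ projecting to $\ket z$), your identity $\erra=\|T\|^2$ is corollary~\ref{heightcor}, and your test-vector estimate is the overlap computation at the end of that section. The paper's \emph{main} proof in section~\ref{thesharpproof} is genuinely different: rather than exhibiting a single witness in $\V'$ for each $\ket z\in\GS$, it applies the symmetry lemma (corollary~\ref{newswap}) to the subspace $\Y'=\AGSP\P_\V(\GS)$, first showing $\Y'$ covers $\GS$ and then that $\GS$ covers $\Y'$ with error ratio $\le\shrink\erra$, so that $\GS\parallel_{\mu'}\Y'\subsp\V'$. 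Your approach is more elementary (no symmetry lemma), while the paper's main proof avoids computing $\|\P_{\V'}\ket z\|$ explicitly.

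There is one loss of generality you should repair. The paper's definition~\ref{AGSPdef} only requires $\AGSP$ to be a \emph{dilation} on $\GS$ (item~\ref{dilation}), not the identity; your step $\AGSP\ket{v_z}=\ket z+\AGSP T\ket z$ fails for a general dilation $\agsp_\GS$. The fix (exactly as in section~\ref{liftingversion}) is to use the test vector $\ket{v_{\agsp_\GS^{-1}z}}=\agsp_\GS^{-1}\ket z+T\agsp_\GS^{-1}\ket z$, so that $\AGSP\ket{v_{\agsp_\GS^{-1}z}}=\ket z+\agsp_{\GS^\perp}T\agsp_\GS^{-1}\ket z$ and $\|\agsp_{\GS^\perp}T\agsp_\GS^{-1}\ket z\|^2\le\shrink\|T\|^2\|\agsp_\GS^{-1}\|^2\le\shrink\erra$ using $\|\agsp_\GS^{-1}\|\le1$. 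Also, you do not need Hermiticity for $\AGSP(\GS^\perp)\subsp\GS^\perp$: this is already part of definition~\ref{AGSPdef}, since commutation of $\AGSP$ with $\P_\GS$ is equivalent to \eqref{toself}.
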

This bound is clearly sharp\footnote{Consider the $\shrink$-AGSP $\AGSP=\ket0\bra0+\sqrt\shrink\ket1\bra1$ on $\CC^2$ and subspaces $\Z,\V\subsp\CC^2$ spanned by $\ket z=\ket0$ and $\ket v=\frac1{\sqrt{1+\erra}}(\ket 0+\sqrt\erra\ket 1)$.}. 
I sketched a proof of lemma \ref{sharplemma} in a restricted formulation in (\cite{abrahamsen_polynomial-time_2019} observation 8.5). Here we give the full proof in section \ref{thesharpproof} and generalize the bound to hold for any typical definition of an AGSP. The proof is based on switching the roles of $\Z$ and $\V$ using the \emph{symmetry lemma} \cite{abrahamsen_polynomial-time_2019}.

In section \ref{liftingversion} we also include an alternative proof of lemma \ref{sharplemma} which is more similar in structure to the proof of a weaker bound in \cite{arad_rigorous_2017} lemma 6. In this case we obtain the strengthened bound by improving lemmas (1 and 2)\footnote{We will refer to these two lemmas, and our improved version, as `lifting lemmas'.} of \cite{arad_rigorous_2017} to have quadratically better dependence on the overlap $\mu$.

Because $\delta'=\frac{\erra'}{1+\erra'}\le\erra'$, lemma \ref{sharplemma} implies:
\begin{corollary}\label{slightly}
If $\V$ covers $\GS$ with overlap $\mu=1-\delta$, then $\AGSP\V$ covers $\GS$ with viability error $\delta'\le \shrink\delta/\mu$.
\end{corollary}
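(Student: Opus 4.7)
The plan is to derive this as a direct consequence of Lemma \ref{sharplemma}, with only a small amount of bookkeeping to translate between the error ratio $\erra$ and the viability error $\delta$. The key ingredients have already been set up in the preceding paragraphs: by definition $\erra = \delta/\mu$, and Lemma \ref{sharplemma} hands us $\erra' \le \shrink \cdot \erra = \shrink\,\delta/\mu$ as soon as we verify that $\V$ covers $\GS$ with error ratio exactly $\erra = \delta/\mu$. Since the overlap and viability are defined through the largest principal angle $\theta$ by $\mu = \cos^2\theta$, $\delta = \sin^2\theta$, this translation is immediate.

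Having obtained the bound on $\erra'$, the final step is to show $\delta' \le \erra'$, after which the corollary follows by composing the two inequalities. For this I would write $\delta' = \sin^2\theta'$ and $\mu' = \cos^2\theta' = 1-\delta'$ where $\theta'$ is the largest principal angle between $\GS$ and $\P_{\V'}(\GS)$, and use
\[
\erra' \;=\; \tan^2\theta' \;=\; \frac{\delta'}{1-\delta'} \;=\; \frac{\delta'}{1-\delta'},
\]
which rearranges to $\delta' = \erra'/(1+\erra')$. Since $\erra' \ge 0$, we have $\delta' \le \erra'$, and chaining with the bound from Lemma \ref{sharplemma} yields $\delta' \le \shrink\,\delta/\mu$ as claimed.

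There is no real obstacle here; the statement is a one-line consequence of Lemma \ref{sharplemma}, and indeed the excerpt flags precisely the identity $\delta' = \erra'/(1+\erra') \le \erra'$ as the reason. The only thing worth double-checking is that $\V' = \AGSP(\V)$ still covers $\GS$ (so that the error ratio $\erra'$ is well-defined in the first place), but this is part of the conclusion of Lemma \ref{sharplemma} itself, so no further work is needed.
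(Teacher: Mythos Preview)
Your proposal is correct and matches the paper's own argument exactly: the paper likewise derives the corollary from Lemma~\ref{sharplemma} via the single observation $\delta'=\erra'/(1+\erra')\le\erra'$. Your additional remark that Lemma~\ref{sharplemma} already guarantees $\V'$ covers $\GS$ is also in line with the paper.
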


The previous state-of-the-art error reduction bound for the general degenerate-case AGSPs (\cite{arad_rigorous_2017} lemma 6) bounded the post-AGSP viability error by
\[\delta_{\other}'=\shrink/\mu^2.\]
The post-AGSP error bound $\delta'$ from corollary \ref{slightly} improves on this bound by a factor $\mu\cdot\delta$, 
which is particularly significant when starting in either the small-overlap $\mu\ll1$ or small-error regime $\delta\ll1$. 

\subsection{Consequences}
The analysis in this work implies that results on local Hamiltonians for unique ground states can be straightforwardly extended to the degenerate case. As an illustration of this we consider an important recent advance in the understanding of 2D spin systems \cite{anshu_entanglement_2019}:
\subsubsection{Generalization of the locally-gapped 2D subvolume law \cite{anshu_entanglement_2019}}
A recent advance of Anshu, Arad, and Gosset \cite{anshu_entanglement_2019} proved a subvolume law for the unique ground state of a frustration-free local Hamiltonian on a 2D lattice in terms of the \emph{local gap} $\gamma$, i.e., the smallest gap of a subsystem. This quantity is motivated, e.g., by \emph{finite-size criteria} \cite{knabe_energy_1988,gosset_local_2016,lemm_finite-size_2019}; we refer to \cite{anshu_entanglement_2019} for details. This represented significant progress in understanding the entanglement structure of local Hamiltonian systems with gap conditions in 2D, providing evidence in favor of the conjectured \emph{area law}. The theorem states (slightly paraphrased):
\begin{theorem}[\cite{anshu_entanglement_2019}]\label{AAGthm}
	Let $H$ be a frustration-free Hamiltonian on a $L_1\times L_2$ lattice of qudits (each with Hilbert space $\CC^d$) and local gap $\gamma$.  If the ground state $\ket\psi$ of $H$ is unique, then the entanglement entropy $\shann(\rho^\psi_{\text{left}})$ of $\ket\psi$ across a vertical cut (of height $L_2$) satisfies
	\[\shann(\rho^\psi_{\text{left}})=O\Big(\big(L_2/\!\sqrt\gamma\big)^{\frac53}\log^{\frac73}(\tfrac{dL_2}\gamma)\Big).\]
\end{theorem}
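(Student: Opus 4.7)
The plan is to reduce Theorem~\ref{AAGthm} to the off-the-rack entanglement bound of \cite{arad_improved_2012}: I would construct an $(\shrink,R)$-AGSP whose target space contains $\ket\psi$, with $R\shrink\le 1/2$ and with entanglement rank $R$ across the vertical cut as small as the claimed subvolume bound allows. Once such an AGSP is in hand, the conclusion $\shann(\rho^\psi_{\text{left}})=O(\log R)$ is immediate from the unique-ground-state OTR bound, and the content of the theorem lies in the explicit parameter count produced by the construction.

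To build the AGSP I would exploit frustration-freeness: the global ground state $\ket\psi$ lies in the kernel of every local term, hence also in the ground space of the truncated Hamiltonian $H_{S_w}=\sum_{i\in S_w}h_i$ for any horizontal strip $S_w$ of width $w$ straddling the cut. The local-gap hypothesis furnishes a spectral gap bounded below by $\gamma$ for $H_{S_w}$ after appropriate rescaling. I would then take $\AGSP$ to be a Chebyshev polynomial of degree $\ell$ applied to a shifted, rescaled $H_{S_w}$; the standard Chebyshev estimate gives a shrinking factor $\shrink\bounded e^{-2\ell\sqrt\gamma}$, while a light-cone argument bounds the Schmidt rank across the cut by $R\bounded d^{O(wL_2)}$ up to polynomial factors in $\ell$.

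Balancing the parameters under $R\shrink\le 1/2$ already yields a subvolume law, but with exponent $2$ rather than the claimed $\tfrac{5}{3}$ if one optimizes only over the pair $(w,\ell)$. Recovering the sharper exponent, which I expect to be the main obstacle, requires an iterative coarse-graining bootstrap in the spirit of \cite{arad_area_2013,anshu_entanglement_2019}: one constructs a sequence of AGSPs on strips of geometrically increasing width and uses the entropy bound from each scale to reduce the effective Schmidt-rank count at the next. The polylogarithmic factor $\log^{7/3}(dL_2/\gamma)$ then arises from tracking the truncation error incurred when restricting $H$ to $S_w$ together with the Chebyshev approximation error, propagated across the iterated bootstrap. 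The remaining technical burden is confined to verifying the Chebyshev and light-cone estimates and setting up the bootstrap; the \emph{form} of the theorem is determined by the off-the-rack bound plus the optimized parameter tradeoff.
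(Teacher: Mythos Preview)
The paper does not prove Theorem~\ref{AAGthm}; it is quoted from \cite{anshu_entanglement_2019} and only used as input to the corollary that follows. The closest thing to a proof in this paper is the sketch in Appendix~\ref{2d_gen}, which recalls the structure of the \cite{anshu_entanglement_2019} AGSP in order to argue that its parameters are degeneracy-independent. So your proposal should be read against that sketch and the original \cite{anshu_entanglement_2019} construction, not against anything developed here.

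Compared with that, your outline diverges at the crucial point. The \cite{anshu_entanglement_2019} AGSP is \emph{not} a Chebyshev polynomial in a strip Hamiltonian. It is built from the coarse-grained detectability-lemma operator, with the lattice partitioned into horizontal bands of width $4t=\tilde O(L_2^{1/3}\gamma^{-2/3})$; the shrinking analysis comes from \cite{anshu_simple_2016}. The $5/3$ exponent is obtained in a single shot by replacing the product of $m=\tilde O(L_2^{1/3}\gamma^{-1/6})$ coarse-grained projectors near the cut with a carefully chosen low-degree polynomial in the $m$ disjoint band Hamiltonians, and then bounding the entanglement rank of that polynomial by a combinatorial argument (their Theorem~5.1). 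There is no iterative bootstrap across scales.

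Your plan therefore has a genuine gap at exactly the step you flag as the main obstacle. The Chebyshev-on-a-strip construction with a light-cone Schmidt-rank bound, optimized over $(w,\ell)$, gives at best an exponent $2$ as you note, and the phrase ``iterative coarse-graining bootstrap in the spirit of \cite{arad_area_2013,anshu_entanglement_2019}'' does not describe a mechanism that lowers $2$ to $5/3$: \cite{arad_area_2013} is a 1D argument, and \cite{anshu_entanglement_2019} does not iterate across scales but rather changes the \emph{form} of the AGSP near the cut. If you want to reach $5/3$ you need the polynomial-replacement idea (or something of comparable strength), not more rounds of the OTR bound. As written, the proposal would establish a weaker subvolume law than claimed.
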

Inspecting the proof by \cite{anshu_entanglement_2019} it is clear (see appendix \ref{2d_gen}) that the analysis of the shrinking and entanglement parameters of their AGSP do not depend on the degeneracy of the ground space. Applying proposition \ref{mainres} to the AGSP of \cite{anshu_entanglement_2019} one obtains:
\begin{corollary}
	Let $H$ be a 2D lattice Hamiltonian satisifying the conditions of theorem \ref{AAGthm}, except now allow the ground space $\GS=\opn{Ker} H$ to have arbitrary dimension $D=\vdim\GS$. Then,
	\[\max_{\ket\psi\in\sphere(\GS)}\shann(\rho^\psi_{\text{left}})=O\Big(\big(L_2/\!\sqrt\gamma\big)^{\frac53}\log^{\frac73}(\tfrac{dL_2}\gamma)+\log D\Big),\]
	where $\shann(\rho^\psi_{\text{left}})$ is the entanglement entropy of $\ket\psi$ across an arbitrary vertical cut.
\end{corollary}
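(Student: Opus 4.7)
The approach is to plug the AGSP constructed in \cite{anshu_entanglement_2019} directly into proposition \ref{mainres} (or its frustrated-case variant lemma \ref{frversion}). That AGSP is naturally associated to a vertical bipartition $\Hl\otimes\Hr$, and by the translation invariance of the construction the same analysis applies to any vertical cut, so it suffices to bound the entanglement across a fixed cut.

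First I would walk through the construction of \cite{anshu_entanglement_2019} (summarized in appendix \ref{2d_gen}) and verify that both the shrinking factor $\shrink$ and the entanglement rank $R$ are bounded purely in terms of the local gap $\gamma$, the local qudit dimension $d$, and the cut height $L_2$; at no step does the argument inspect the dimension of $\opn{Ker} H$. In particular, after the usual power-boosting, the hypothesis $R\shrink\le 1/2$ of proposition \ref{mainres} is satisfied with the same estimate $\log R=O\big((L_2/\!\sqrt\gamma)^{5/3}\log^{7/3}(dL_2/\gamma)\big)$ as is obtained in the non-degenerate case.

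Second, I would apply proposition \ref{mainres} to the AGSP so obtained, taking $\GS=\opn{Ker} H$ and $D=\vdim\GS$. This gives
\[\max_{\ket\psi\in\sphere(\GS)}\shann(\rho^\psi_\lft)=1.01\cdot\log D+O(\log R),\]
and substituting the estimate for $\log R$ delivers the claimed bound. If the AGSP's target space is only approximately equal to $\opn{Ker} H$ because of spectral truncations inside the construction, I would instead invoke lemma \ref{frversion}, which yields the same bound.

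The main obstacle will be confirming that the bootstrapping step internal to \cite{anshu_entanglement_2019} — not only the final invocation of the off-the-rack bound — carries over to the degenerate case without picking up any $D$-dependent factor. This is precisely the role played by the sharp error-reduction bound (lemma \ref{sharplemma}) and corollary \ref{slightly}: the improved factor $\delta\mu$ relative to the literature's $\shrink/\mu^2$ is what prevents the overlap $\mu$ from degrading along the bootstrap, and is what allows the shrinking condition to remain $R\shrink\le 1/2$ rather than degrade to the weaker $R^C\shrink\le c$ accessible from existing tools alone.
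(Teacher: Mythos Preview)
Your approach is correct and matches the paper's: verify (as in appendix \ref{2d_gen}) that the shrinking and entanglement-rank bounds for the AGSP of \cite{anshu_entanglement_2019} are independent of $D$, then invoke proposition \ref{mainres}. Two small clarifications: since $H$ is frustration-free the AGSP's target space is exactly $\opn{Ker}H$, so the fallback to lemma \ref{frversion} is unnecessary; and the ``obstacle'' you flag in the last paragraph is not actually present---the only bootstrapping in \cite{anshu_entanglement_2019} is the invocation of the non-degenerate off-the-rack bound itself, which proposition \ref{mainres} replaces in one stroke, so once the AGSP parameters are seen to be $D$-independent there is nothing further to check.
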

In particular,
	\begin{itemize}
		\item The entanglement bound is the same as \cite{anshu_entanglement_2019} up to a constant factor when the degeneracy has growth at most $D=2^{O(L_2^{5/3})}$.
		\item In the parameter regime for $L_1,L_2,\gamma$ where \cite{anshu_entanglement_2019} yields a subvolume law (e.g., $L_1=L_2$ and $\gamma=\Omega(1)$), a subvolume law still holds for \emph{any} sub-exponential degeneracy $D=2^{o(L_1L_2)}$.
	\end{itemize}

\section{Preliminaries}

Given a Hamiltonian $H$, an AGSP (approximate ground space projector) for $H$ is an operator $\AGSP$ which shrinks the excited states but not the vectors in the ground space $\GS$ of $H$. We do not directly invoke the Hamiltonian itself, as the AGSP property (but generally not the contruction of an AGSP) can be captured in terms of just the ground space $\GS$. 

In the interest of broad applicability we define an AGSP such that the definitions used in the literature \cite{arad_improved_2012,arad_area_2013,arad_rigorous_2017} are all special cases of the definition used here. 

\subsection{AGSPs}
Let $\B(\H)$ denote the space of all bounded linear operators on Hilbert space $\H$.
\begin{definition}\label{AGSPdef}
	A $\shrink$-AGSP with target space $\Z\subsp\H$ is an operator $\AGSP\in\B(\H)$ which commutes with $\P_\Z$ and satisfies
	\begin{enumerate}\item
		$\P_\Z \AGSP^\dag\AGSP\P_\Z\opge\P_\Z$,\qquad i.e., $\AGSP$ is a dilation on $\GS$\label{dilation}
	\item$\|\AGSP\P_{\Z^\perp}\|\le\sqrt\shrink$.\label{shrinkitem}
	\end{enumerate}
A $(\shrink,R)$-AGSP is a $\shrink$-AGSP $\AGSP\in\Bl\otimes\Br$ with entanglement rank at most $R$.
\end{definition}
The condition that $\AGSP$ commute with $\P_\Z$ is equivalent with the following two conditions from \cite{arad_improved_2012,arad_area_2013}:
\begin{equation}\label{toself}\ket z\in\GS\implies\AGSP\ket z\in\GS\aand\ket y\in\GS^\perp\implies\AGSP\ket y\in\GS^\perp.\end{equation}
Indeed, these imply $\AGSP\P_\GS=\P_\GS\AGSP\P_\GS=\P_\GS\AGSP-\P_\GS\AGSP\P_{\GS^\perp}=\P_\GS\AGSP$, where the last equality is because $\AGSP$ sends $\GS^\perp$ to itself. In the special case where $\AGSP$ is Hermitian it suffices to check one of the implications \eqref{toself}.  

\begin{observation}\label{directsum}
$\AGSP$ is a $\shrink$-AGSP for $\Z\subsp\H$ iff it is of the form $\agsp_{\Z}\oplus\agsp_{\Z^\perp}$ where $\agsp_{\Z}\in\B(\Z)$ is a dilation (in particular $\agsp_\Z$ is invertible), and $\agsp_{\Z^\perp}\in\B(\Z^\perp)$ satisfies $\|\agsp_{\Z^\perp}\|\le\sqrt\shrink$.
\end{observation}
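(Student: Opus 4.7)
The plan is to unpack each direction of the iff directly from Definition \ref{AGSPdef}, using the commutation with $\P_\Z$ to produce the block decomposition.

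For the forward direction, I would first note that since $\AGSP$ commutes with $\P_\Z$, the equivalent conditions \eqref{toself} give $\AGSP(\Z)\subseteq\Z$ and $\AGSP(\Z^\perp)\subseteq\Z^\perp$. Thus $\AGSP$ leaves both summands of $\H=\Z\oplus\Z^\perp$ invariant and so decomposes as $\AGSP=\agsp_\Z\oplus\agsp_{\Z^\perp}$ with $\agsp_\Z:=\AGSP|_\Z\in\B(\Z)$ and $\agsp_{\Z^\perp}:=\AGSP|_{\Z^\perp}\in\B(\Z^\perp)$. The dilation property of $\agsp_\Z$ is read off by pre- and post-composing condition \ref{dilation} with the isometric inclusion $\Z\into\H$: $\P_\Z\AGSP^\dag\AGSP\P_\Z\opge\P_\Z$ restricted to $\Z$ is exactly $\agsp_\Z^\dag\agsp_\Z\opge\id_\Z$. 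Similarly, $\AGSP\P_{\Z^\perp}$ is the operator on $\H$ that acts as $\agsp_{\Z^\perp}$ on $\Z^\perp$ and zero on $\Z$, so condition \ref{shrinkitem} is precisely $\|\agsp_{\Z^\perp}\|\le\sqrt\shrink$.

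For the reverse direction, suppose $\AGSP=\agsp_\Z\oplus\agsp_{\Z^\perp}$ with the stated properties. The block-diagonal form trivially commutes with $\P_\Z$, and the two norm/dilation inequalities are exactly what is needed to recover conditions \ref{dilation} and \ref{shrinkitem} after one conjugates by $\P_\Z$ and $\P_{\Z^\perp}$, respectively.

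The parenthetical invertibility of $\agsp_\Z$ is then a one-line consequence: the inequality $\agsp_\Z^\dag\agsp_\Z\opge\id_\Z$ makes $\agsp_\Z^\dag\agsp_\Z$ a positive operator bounded below by $1$, hence invertible, which forces $\agsp_\Z$ to be injective with closed range and (in the finite-dimensional target space $\Z$ typically considered) invertible.

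No step here is a real obstacle; the only thing worth being careful about is that condition \ref{dilation} is stated on all of $\H$ with $\P_\Z$ sandwiches, so one must verify that restricting to $\Z$ really is equivalent to the dilation inequality on $\agsp_\Z$ itself, which is immediate because $\P_\Z$ is the identity on $\Z$ and annihilates $\Z^\perp$.
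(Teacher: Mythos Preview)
Your proposal is correct and is exactly the natural unpacking of Definition~\ref{AGSPdef}; the paper states this observation without proof, treating it as immediate, so your argument simply fills in what the paper leaves implicit.
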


Letting $\Z$ be the lowest-energy eigenspace of some Hamiltonian we can compare refinition \ref{AGSPdef} with standard definitions of AGSPs.
\begin{observation}
Definition \ref{AGSPdef} includes the definitions of AGSP in \cite{arad_improved_2012,arad_area_2013} and of \emp{spectral} AGSP \cite{arad_rigorous_2017} as special cases.\end{observation}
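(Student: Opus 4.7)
The plan is to unpack each literature definition and check it against the two numbered conditions in Definition \ref{AGSPdef} (together with commutation with $\P_\Z$), so the observation reduces to a definition-by-definition bookkeeping. I would begin with the AGSP definition of \cite{arad_improved_2012,arad_area_2013}, in which $\AGSP$ is prescribed to act as the identity on $\Z$ and to satisfy $\|\AGSP\ket y\|^2\le\shrink$ for every unit $\ket y\in\Z^\perp$. The identity action on $\Z$ gives both implications of \eqref{toself}, so $[\AGSP,\P_\Z]=0$, and it also trivializes the dilation condition \ref{dilation}, which reduces to $\P_\Z\opge\P_\Z$. The shrinking condition \ref{shrinkitem} is then an immediate restatement of the $\Z^\perp$ norm bound.

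Next I would handle the spectral AGSP of \cite{arad_rigorous_2017}: here $\AGSP$ is Hermitian with a spectral decomposition in which the eigenvalue $1$ has eigenprojector $\P_\Z$ and all other eigenvalues $|\lambda|\le\sqrt\shrink$. Hermiticity together with this eigenprojector structure yields $\AGSP\P_\Z=\P_\Z\AGSP$ directly; the relation $\AGSP\P_\Z=\P_\Z$ gives the dilation condition as in the previous case, and $\|\AGSP\P_{\Z^\perp}\|$ equals the maximum of the remaining $|\lambda|$, which is bounded by $\sqrt\shrink$ by hypothesis. The entanglement-rank part of the definition carries over verbatim in both cases because it is imposed on $\AGSP$ itself, not on how $\AGSP$ interacts with $\Z$.

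I do not expect any genuine technical obstacle; the only place that warrants care is that each reference phrases its ground-space condition slightly differently, e.g. as $\AGSP\ket z=\ket z$ for each $\ket z\in\Z$ in \cite{arad_improved_2012,arad_area_2013} versus as an eigenprojector identity in \cite{arad_rigorous_2017}, and one should spell out that both collapse to $\AGSP\P_\Z=\P_\Z$ before invoking conditions \ref{dilation}--\ref{shrinkitem}. Once this is done, the stated inclusions follow, and observation \ref{directsum} makes precise that Definition \ref{AGSPdef} is genuinely broader: it additionally permits any invertible dilation, not just the identity, for the action $\agsp_\Z$ on the target space.
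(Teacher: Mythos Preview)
Your overall approach---checking commutation with $\P_\Z$ and items \ref{dilation}--\ref{shrinkitem} for each literature definition---matches the paper's brief justification following the observation. There is, however, a logical slip in your treatment of the \cite{arad_improved_2012,arad_area_2013} case: you assert that ``the identity action on $\Z$ gives both implications of \eqref{toself}'', but for a general (non-Hermitian) operator the condition $\AGSP\ket z=\ket z$ for all $\ket z\in\Z$ does \emph{not} imply $\AGSP(\Z^\perp)\subset\Z^\perp$; take e.g.\ $\AGSP\ket0=\ket0$, $\AGSP\ket1=\ket0$ on $\CC^2$. The fix is that the definitions in \cite{arad_improved_2012,arad_area_2013} explicitly impose both implications of \eqref{toself} as separate hypotheses---as the paper itself notes in the paragraph preceding the observation---so the second implication is part of the input, not something to derive from the identity action. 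With that correction the rest of your verification (including the spectral-AGSP case and the closing remark that Definition \ref{AGSPdef} is strictly broader via observation \ref{directsum}) goes through.
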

Specializing item \ref{dilation} of definition \ref{AGSPdef} to $\AGSP\P_\GS=\P_\GS$ one obtains the definition of AGSP in \cite{arad_improved_2012,arad_area_2013}. A \emph{spectral} AGSP \cite{arad_rigorous_2017} corresponds to definition \ref{AGSPdef} with the additional requirement that $\AGSP\opge0$, and that $\AGSP$ and $H$ be simultaneously diagonalizable.

\subsection{Comparing subspaces}
We use the terminology of \emph{overlap} $\mu$ and \emph{viability} error $\delta$ of \cite{arad_rigorous_2017}:
\begin{definition}Given subspaces $\Z,\V\subsp\H$, the \emp{overlap} of $\V$ onto $\Z$ is $\onto\V\Z=\min_{\ket z\in\sphere(\Z)}{\bra{z}\P_\V\ket{z}}$. Letting $\mu=\onto\V\Z$, say that $\delta=1-\mu$ is the \emp{viability error} of $\V$ onto $\Z$. $\erra=\delta/\mu$ is the \emp{error ratio}.
\end{definition}
\begin{definition}
	Given subspaces $\Z,\V\subsp\H$ we say that $\V$ \emp{covers} $\Z$ if $\onto\V\Z>0$, or equivalently $\P_\Z(\V)=\Z$. Introduce notation $\supsp_\mu$ and $\parallel_\mu$ as follows:
	\begin{itemize}
		\item\textup{(${\V\supsp_\mu\Z}$)}``$\V$ is {$\mu$-overlapping} onto $\Z$''  if $\onto\V\Z\ge\mu$. 
		\item$({\V\parallel_\mu\Z})$	``$\V$ and $\Z$ are {mutually $\mu$-overlapping}''  if ${\V\supsp_\mu\Z}$ and ${\Z\supsp_\mu\V}$.
	\end{itemize}
\end{definition}

\subsection{The transition map and symmetry between subspaces}

Let $\ON\V$ denote the orthogonal projection on a subspace $\V\subsp\H$ \emph{ when viewed as a map $\H\to\V$} (i.e., with restricted codomain as opposed to $\P_\V:\H\to\H$. Note that $\P_\V=\IN\V\ON\V$).

Given another subspace $\Z\subsp\H$ we define the \emph{transition map $\tofrom\V\Z$ from $\Z$ to $\V$} as the restriction of $\ON\V$ to domain $\Z$. Formally we have:
\begin{definition}
Given a subspace $\V\subsp\H$,  let $\ON\V:\H\to\V$ be the adjoint of the inclusion map $\IN\V:\V\into\H$.
The transition map from $\Z$ to $\V$ is $\tofrom\V\Z=\ON\V\IN\Z$.
\end{definition}
The overlap $\mu$ of $\V$ onto $\Z$ equals $\onto{\V}{\Z}=\min\spec(\tofrom\Z\V \tofrom\V\Z)$, where $\spec$ is the spectrum.  
The \emph{principal angles between $\Z$ and $\V$} are defined \cite{galantai_jordans_2006,ben-israel_geometry_1967} as the $\arccos$ of the singular values of $\tofrom\V\Z$.
This definition illustrates a symmetry between two subspaces. Exploiting this symmetry is essential to us in proving the sharp error reduction.


\begin{observation}\label{same}
If $\onto\Z\V,\onto\V\Z>0$ then $\onto\V\Z=\onto\Z\V$.
\end{observation}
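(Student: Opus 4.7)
My plan is to exploit the adjoint relationship between the two transition maps $\tofrom{\V}{\Z}$ and $\tofrom{\Z}{\V}$. Observe that $\tofrom{\V}{\Z}=\ON{\V}\IN{\Z}\colon\Z\to\V$ is the adjoint of $\tofrom{\Z}{\V}=\ON{\Z}\IN{\V}\colon\V\to\Z$, because $\ON{\V}$ and $\IN{\V}$ are adjoints by definition (and likewise for $\Z$). Consequently the two compositions $\tofrom{\Z}{\V}\tofrom{\V}{\Z}\in\B(\Z)$ and $\tofrom{\V}{\Z}\tofrom{\Z}{\V}\in\B(\V)$ are of the form $A^\dag A$ and $AA^\dag$ for $A=\tofrom{\V}{\Z}$, so their nonzero spectra agree (with multiplicities) and equal the squares of the nonzero singular values of $A$.

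Next I will use the characterization $\onto{\V}{\Z}=\min\spec(\tofrom{\Z}{\V}\tofrom{\V}{\Z})$ given just above the observation, and the analogous identity with the roles swapped. The hypothesis that both overlaps are strictly positive says neither of these spectra contains zero, so for each operator the minimum eigenvalue coincides with the minimum \emph{nonzero} eigenvalue. Combined with the previous paragraph this immediately gives $\onto{\V}{\Z}=\onto{\Z}{\V}$.

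The only mildly delicate point is to confirm that the two operators have the same number of eigenvalues in play, i.e.\ that $\vdim\Z=\vdim\V$ under the hypothesis. This is automatic: $\onto{\V}{\Z}>0$ means $\tofrom{\V}{\Z}$ is injective (all $\vdim\Z$ eigenvalues of $A^\dag A$ are positive), hence $\vdim\V\ge\vdim\Z$, and symmetrically $\onto{\Z}{\V}>0$ gives the reverse inequality. With the dimensions equal, the full spectra (not just the nonzero parts) coincide, and the minima match.

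The proof is essentially a one-line consequence of the adjoint identity $\tofrom{\V}{\Z}^\dag=\tofrom{\Z}{\V}$ combined with the standard fact about $A^\dag A$ versus $AA^\dag$; I do not anticipate any obstacle. The main value of writing it out this way is that it foreshadows the symmetry-lemma argument used later for lemma \ref{sharplemma}, where swapping the roles of $\Z$ and $\V$ is the crucial move.
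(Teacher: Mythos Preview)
Your proof is correct and follows essentially the same route as the paper: both identify $\tofrom{\V}{\Z}$ and $\tofrom{\Z}{\V}$ as mutual adjoints, invoke the fact that $A^\dag A$ and $AA^\dag$ share nonzero spectrum (the paper names this Jacobson's lemma), and use the positivity of both overlaps to rule out $0$ from each spectrum so that the minima coincide. Your additional dimension-counting paragraph is a harmless elaboration but not needed for the argument.
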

\begin{proof}
Let $M=\tofrom\Z\V$. Then $\spec(M M^\dag)\xpt\{0\}=\spec(M^\dag M)\xpt\{0\}$ (Jacobson's lemma). The assumed nonzero overlaps then imply $\spec(M^\dag M)=\spec(MM^\dag)$. So $\onto\V\Z=\min\spec(MM^\dag)=\min\spec(M^\dag M)=\onto\Z\V$.\end{proof}

\begin{corollary}[Symmetry lemma \cite{abrahamsen_polynomial-time_2019}]\label{newswap}
For $\V_1,\V_2\subsp\H$ and $\mu>0$,
\[\V_1\subsp_\mu\V_2\aand\text{$\V_1$ covers $\V_2$}\iiff\V_1\parallel_\mu\V_2.\]
\end{corollary}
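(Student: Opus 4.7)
The plan is to deduce the corollary directly from Observation~\ref{same}; the substantive ingredient (Jacobson's lemma, namely that $\tofrom\V\Z\tofrom\Z\V$ and $\tofrom\Z\V\tofrom\V\Z$ share their nonzero spectrum) is already packaged there. So there is essentially nothing to prove beyond unfolding the definitions of $\subsp_\mu$, $\parallel_\mu$, and ``covers''.

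For the forward implication, I would unpack $\V_1\subsp_\mu\V_2$ as $\onto{\V_2}{\V_1}\ge\mu$ and unpack ``$\V_1$ covers $\V_2$'' as $\onto{\V_1}{\V_2}>0$. Since $\mu>0$, both overlaps are strictly positive, so Observation~\ref{same} applies and yields $\onto{\V_1}{\V_2}=\onto{\V_2}{\V_1}\ge\mu$, which is exactly $\V_1\parallel_\mu\V_2$. The reverse implication is immediate from the definitions: $\V_1\parallel_\mu\V_2$ gives $\onto{\V_1}{\V_2}\ge\mu$ and $\onto{\V_2}{\V_1}\ge\mu$, and since $\mu>0$ the first inequality is precisely ``$\V_1$ covers $\V_2$'' while the second is precisely $\V_1\subsp_\mu\V_2$.

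There is no real obstacle here; the only potential pitfall is tracking which direction the $\subsp_\mu$ notation points, and noting why the hypothesis $\mu>0$ is essential on the right-hand side when invoking Observation~\ref{same}. Without the covering hypothesis on the left-hand side, one of the two overlaps could vanish while the other is strictly positive (e.g.\ take $\V_1=\H$ and $\V_2\subsetneq\H$), and the symmetry $\onto\V\Z=\onto\Z\V$ would fail in that regime.
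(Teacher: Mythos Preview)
Your argument is correct and matches the paper's approach: the corollary is presented there without proof, as an immediate consequence of Observation~\ref{same}. One minor slip in the closing aside: to illustrate the necessity of the covering hypothesis you should take $\V_1\subsetneq\H=\V_2$ rather than $\V_1=\H$, since with your labeling $\V_1$ \emph{does} cover $\V_2$ (indeed $\onto{\V_1}{\V_2}=1$) while it is $\V_1\subsp_\mu\V_2$ that fails.
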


\begin{lemma}\label{obv}
For subspaces $\Z,\Y\subsp\H$ and $\mu>0$,
\[\V\supsp_\mu\Z\iiff\P_\V(\Z)\parallel_\mu\Z.\]
\end{lemma}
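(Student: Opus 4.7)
The plan is to reduce the lemma to the symmetry lemma (Corollary~\ref{newswap}) via a single elementary identity: for every $\ket z\in\Z$, the orthogonal projection of $\ket z$ onto the subspace $\P_\V(\Z)$ coincides with $\P_\V\ket z$. Indeed, $\P_\V\ket z\in\P_\V(\Z)$ by definition, while the residual $\ket z-\P_\V\ket z$ lies in $\V^\perp\subseteq\P_\V(\Z)^\perp$, so $\P_\V\ket z$ satisfies the defining property of $\P_{\P_\V(\Z)}\ket z$. Taking expectations against $\ket z$ and minimizing over $\ket z\in\sphere(\Z)$ yields the key identity
\[
\onto{\P_\V(\Z)}{\Z}=\onto{\V}{\Z}.
\]

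For the forward direction ($\V\supsp_\mu\Z\Rightarrow\P_\V(\Z)\parallel_\mu\Z$), the identity immediately gives $\P_\V(\Z)\supsp_\mu\Z$, and since $\mu>0$ this overlap is positive, so $\P_\V(\Z)$ covers $\Z$. Applying the symmetry lemma (Corollary~\ref{newswap}) with $\V_1=\P_\V(\Z)$ and $\V_2=\Z$ upgrades this one-sided relation to the mutual $\P_\V(\Z)\parallel_\mu\Z$.

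For the reverse direction ($\P_\V(\Z)\parallel_\mu\Z\Rightarrow\V\supsp_\mu\Z$), extract the piece $\P_\V(\Z)\supsp_\mu\Z$ from the mutual overlap. Since $\P_\V(\Z)\subseteq\V$ gives $\P_{\P_\V(\Z)}\le\P_\V$ as operators on $\H$, for every unit $\ket z\in\Z$ we have $\bra z\P_\V\ket z\ge\bra z\P_{\P_\V(\Z)}\ket z\ge\mu$, so $\V\supsp_\mu\Z$. (Alternatively, invoke the key identity again.)

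There is no real obstacle: the lemma is essentially the observation that $\P_\V$ is ``lossless'' when applied to $\Z$ in the overlap sense, after which the symmetry lemma does all the work. The only step requiring any care is confirming that $\P_\V\ket z$ actually realizes the orthogonal projection onto $\P_\V(\Z)$, which is precisely the Pythagoras argument above.
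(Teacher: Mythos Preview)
Your approach is essentially the paper's: establish $\P_\V(\Z)\supsp_\mu\Z$ via the identity $\P_{\P_\V(\Z)}\ket z=\P_\V\ket z$ for $\ket z\in\Z$ (the paper writes this as $\P_\Z\P_\V\P_\Z=\P_\Z\P_\Y\P_\Z$ with $\Y=\P_\V(\Z)$), then invoke the symmetry lemma.

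However, your invocation of Corollary~\ref{newswap} is misapplied. With $\V_1=\P_\V(\Z)$ and $\V_2=\Z$, the hypothesis of the symmetry lemma reads ``$\V_1\subsp_\mu\V_2$ and $\V_1$ covers $\V_2$'', i.e., $\Z\supsp_\mu\P_\V(\Z)$ together with $\P_\V(\Z)$ covers $\Z$. You have only shown $\P_\V(\Z)\supsp_\mu\Z$, which is the \emph{opposite} one-sided overlap; and the covering statement you wrote down (``$\P_\V(\Z)$ covers $\Z$'') is already contained in $\P_\V(\Z)\supsp_\mu\Z$ and adds nothing. As stated, the lemma does not apply.

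The fix is immediate from your own identity: since $\P_{\P_\V(\Z)}\ket z=\P_\V\ket z$ for all $\ket z\in\Z$, you get $\P_{\P_\V(\Z)}(\Z)=\P_\V(\Z)$, i.e., $\Z$ covers $\P_\V(\Z)$. Now apply Corollary~\ref{newswap} with the roles swapped, $\V_1=\Z$ and $\V_2=\P_\V(\Z)$: the hypotheses become $\P_\V(\Z)\supsp_\mu\Z$ (which you have) and $\Z$ covers $\P_\V(\Z)$ (which you just obtained), yielding $\Z\parallel_\mu\P_\V(\Z)$ as desired. This is exactly how the paper proceeds (``$\P_\Y(\Z)=\P_\Y\P_\V(\Z)=\P_\Y(\Y)=\Y$ so $\Z$ covers $\Y$'').
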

\begin{proof}
	($\Leftarrow$) is clear. ($\Rightarrow$): Let $\Y=\P_\V(\Z)$. Then $\P_\Z\P_\V\P_\Z=\P_\Z\P_\Y\P_\Z$, so $\V\supsp_\mu\Z\implies\Y\supsp_\mu\Z$. But $\P_\Y(\Z)=\P_\Y\P_\V(\Z)=\P_\Y(\Y)=\Y$ so $\Z$ covers $\Y$, hence $\Y\parallel_\mu\Z$ by the symmetry lemma.
\end{proof}

\section{Proof of sharp error reduction}\label{thesharpproof}
The proof of lemma \ref{sharplemma} uses the {symmetry lemma} to switch the roles of the approximated subspace $\GS$ and the approximating subspace (first $\V$ then $\V'$).
\begin{lemma*}Let $\AGSP$ be a $\shrink$-AGSP for $\GS\subsp\H$, and suppose $\V\subsp\H$ covers $\GS$ with error ratio $\erra$. then $\V'=\AGSP(\V)$ covers $\GS$ with error ratio $\erra'\le\shrink\cdot\erra$.
\end{lemma*}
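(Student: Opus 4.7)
The plan is to execute the two role-switches advertised in the preamble. First I would pass from $\V$ to the equal-dimension subspace $\Y := \P_\V(\Z) \subsp \V$. Lemma \ref{obv} gives $\Y \parallel_\mu \Z$ with $\mu = 1/(1+\erra)$; moreover $\P_\V\ket z \in \Y$ for every $\ket z \in \Z$, so the overlap of $\Y$ onto $\Z$ coincides with that of $\V$ onto $\Z$, hence $\Y$ also has error ratio $\erra$ onto $\Z$. By observation \ref{same} the symmetric statement holds, so I can express $\erra$ equivalently as $\max_{\ket y \in \sphere(\Y)} \tan^2\theta(\ket y, \Z)$---a maximization over $\Y$ rather than $\Z$, which is what lets me apply the dilation and shrinking bounds of $\AGSP$ directly.

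The main computation uses observation \ref{directsum}, which decomposes $\AGSP = \agsp_\Z \oplus \agsp_{\Z^\perp}$. For $\ket y \in \Y$ written as $\ket y = \ket z + \ket{y^\perp}$ with $\ket z \in \Z$ and $\ket{y^\perp} \in \Z^\perp$, the image decomposes as $\AGSP\ket y = \agsp_\Z\ket z + \agsp_{\Z^\perp}\ket{y^\perp}$, and the dilation bound $\|\agsp_\Z\ket z\| \ge \|\ket z\|$ together with the shrinking bound $\|\agsp_{\Z^\perp}\ket{y^\perp}\|^2 \le \shrink\,\|\ket{y^\perp}\|^2$ yield
\[\tan^2\theta(\AGSP\ket y,\Z) = \frac{\|\agsp_{\Z^\perp}\ket{y^\perp}\|^2}{\|\agsp_\Z\ket z\|^2} \le \shrink\cdot\tan^2\theta(\ket y,\Z).\]

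For the second role-switch I observe that $\P_\Z\AGSP|_\Y = \agsp_\Z\circ\P_\Z|_\Y$ is surjective onto $\Z$ (since $\agsp_\Z$ is invertible and $\Y$ covers $\Z$), so $\Y' := \AGSP(\Y)$ covers $\Z$ and dimension counting gives $\dim\Y' = \dim\Z$. Every unit $\ket{y'} \in \Y'$ equals a normalized $\AGSP\ket y$ for some $\ket y \in \Y$, so maximizing the displayed inequality over $\ket y$ bounds the error ratio of $\Z$ onto $\Y'$ by $\shrink\cdot\erra$. Observation \ref{same} transports this bound to the error ratio of $\Y'$ onto $\Z$, and since $\Y' \subsp \V' = \AGSP(\V)$ the error ratio of $\V'$ onto $\Z$ can only be smaller---concluding $\erra' \le \shrink\cdot\erra$.

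The hard part, and the reason prior analyses gave weaker bounds, is that the dilation bound on $\agsp_\Z$ does not directly reduce $\tan^2\theta(\ket z,\V)$ for $\ket z \in \Z$, since $\AGSP\ket z$ stays in $\Z$ and carries no information from the shrinking side. The role-switch via the symmetry lemma is precisely what makes the bound accessible: after passing to $\Y$ one maximizes $\tan^2\theta(\ket y,\Z)$ instead, and now the $\Z^\perp$-component of $\ket y$ is shrunk by $\sqrt\shrink$ while the $\Z$-component is dilated, producing the clean multiplicative factor $\shrink$ with no residual $1/\mu$. The remainder is careful bookkeeping to verify each covering hypothesis of the symmetry lemma.
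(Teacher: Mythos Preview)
Your proof is correct and follows essentially the same route as the paper: pass from $\V$ to $\Y=\P_\V(\Z)$ via lemma~\ref{obv}, use the direct-sum decomposition of $\AGSP$ to bound $\tan^2\theta(\AGSP\ket y,\Z)\le\shrink\cdot\tan^2\theta(\ket y,\Z)$ for every $\ket y\in\Y$, and then transport the resulting bound on $\Z$'s overlap onto $\Y'=\AGSP(\Y)$ back to the overlap of $\Y'\subsp\V'$ onto $\Z$ using the symmetry lemma/observation~\ref{same}. The only differences are cosmetic---you phrase the core inequality in terms of $\tan^2\theta$ and spell out the dimension bookkeeping for $\Y'$, whereas the paper writes the same estimate as $\|\P_{\Z^\perp}\ket{y'}\|^2\le\shrink\erra\,\|\P_\Z\ket{y'}\|^2$ directly.
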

\begin{proof}
	By the symmetry lemma it suffices to find a subspace $\Y'\subsp\V'$ such that
	\begin{center}
		\begin{enumerate*}[label=\textbf{(\roman*)}]
		\item $\Y'$ covers $\Z$, and\qquad \label{Y_overlaps_Z}
		\item $\Z$ covers $\Y'$ with error ratio $\shrink\cdot\erra$.\label{Z_mu_Y}
	\end{enumerate*}
\end{center}
	Indeed, this implies $\Z\parallel_{\mu'}\Y'$ with $\mu'=\frac1{1+\shrink\cdot\erra}$. Thus $\Z\subsp_{\mu'}\V'$ and we are done.

	\paragraph{Pick $\bs{\Y'=\AGSP\P_\V(\Z)}$.} We argue items \ref{Y_overlaps_Z} and \ref{Z_mu_Y}:
\paragraph{\ref{Y_overlaps_Z}} $\P_\Z(\Y')=\AGSP\P_\Z\P_\V\P_\Z(\Z)=\AGSP\Z=\Z$. This shows that $\Y'$ covers $\Z$.

\paragraph{\ref{Z_mu_Y}} (show $\Z\supsp_{\mu'}\Y'$) 
It suffices to show that, given an arbitrary $\ket{y'}\in\Y'$,
		\begin{equation}\label{ypPyp}{\bra{y'}\P_{\Z^\perp}\ket{y'}}\le\shrink\erra{\bra{y'}\P_\Z\ket{y'}}.\end{equation}
		Let $\Y=\P_\V(\Z)$ and pick $\ket y\in\Y$ such that $\ket{y'}=\AGSP\ket y$ (since $\Y'=\AGSP\Y$).
By lemma \ref{obv} we have $\Y\parallel_\mu\Z$ where $\mu=\frac1{1+\erra}$. In particular $\Z\supsp_\mu\Y$ implies
${\bra{y}\P_{\Z^\perp}\ket{y}}\le\erra{\bra{y}\P_\Z\ket{y}}.$
Apply the bound $\|\AGSP\P_{\Z^\perp}\|\le\sqrt\shrink$ and the dilation property on $\Z$:
\begin{equation}\|\AGSP\P_{\Z^\perp}\ket{y}\|\le\sqrt\shrink\|\P_{\Z^\perp}\ket{y}\|\le\sqrt{\shrink\erra}\|\P_\Z\ket{y}\|\le\sqrt{\shrink\erra}\|\AGSP\P_\Z\ket y\|.\end{equation}
Recognizing the LHS as $\|\P_{\Z^\perp}\ket{y'}\|$ and the RHS as $\sqrt{\shrink\erra}\|\P_\Z\ket{y'}\|$ establishes \eqref{ypPyp}.
\end{proof}

\section{Degenerate-case entanglement bound} \label{EB}
With the sharp error reduction bound in hand the degenerate-case OTR entanglement bound follows using a standard \emph{bootstrapping argument} \cite{arad_improved_2012,arad_area_2013,arad_rigorous_2017}. This argument combines a method to increase overlap with one to reduce entanglement. %
The entanglement is represented in terms of the {dimension} of a $\mu$-overlapping subspace on the left subsystem.

\subsection{Preliminaries on bipartite spaces}\label{prelimbipart}

\begin{definition}
	Given a subspace $\GS\subsp\Hl\otimes\Hr$ of a bipartite space and a subspace $\V\subsp\Hl$ of the left tensor factor, define the (left) overlap $\mu$, viability error $\delta$, and error ratio $\erra=\delta/\mu$ of $\V$ onto $\GS$ as the corresponding parameters for $\V\otimes\Hr$ onto $\GS$.
\end{definition}

\begin{definition}[\cite{abrahamsen_polynomial-time_2019}]
A $\shrink$-PAP (partial approximate projector) $\PAP$ with target space $\GS\subsp\Hl\otimes\Hr$ is a space of operators $\PAP\subsp\Bl$ such that $\PAP\otimes\Br$ contains some $\shrink$-AGSP $\AGSP$ for $\GS$.
\end{definition}
If $\AGSP\in\B(\Hl\otimes\Hr)$ is a $(\shrink,R)$-AGSP, then there exists a corresponding $\shrink$-PAP $\PAP\subsp\Bl$ with $\vdim\PAP\le R$.
\begin{corollary}[of lemma \ref{sharplemma}]
Let $\PAP\subsp\Bl$ be a $\shrink$-PAP with target space $\GS\subsp\Hl\otimes\Hr$, and suppose $\V\subsp\Hl$ covers $\GS$ with error ratio $\erra$. Then $\PAP\V=\{\agsp\ket v:\agsp\in\PAP,\ket v\in\V\}$ covers $\GS$ with error ratio $\erra'\le\shrink\cdot\erra$.
\end{corollary}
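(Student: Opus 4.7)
The plan is to derive this corollary directly from lemma \ref{sharplemma} by lifting the PAP $\PAP$ to an honest AGSP on the full bipartite space $\Hl\otimes\Hr$ and then invoking an easy monotonicity property of overlaps. By the definition of a PAP, there exists a $\shrink$-AGSP $\AGSP\in\PAP\otimes\Br$ with target $\GS$. By the bipartite conventions of section \ref{prelimbipart}, the hypothesis that $\V$ covers $\GS$ with error ratio $\erra$ is exactly the statement that $\V\otimes\Hr$ covers $\GS$ with error ratio $\erra$.

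I would then apply lemma \ref{sharplemma} to $\AGSP$ and to the covering subspace $\V\otimes\Hr$. This yields immediately that $\AGSP(\V\otimes\Hr)$ covers $\GS$ with error ratio at most $\shrink\cdot\erra$. The final step is to pass from $\AGSP(\V\otimes\Hr)$ to $(\PAP\V)\otimes\Hr$: writing $\AGSP$ as a finite sum $\sum_i \agsp_i\otimes B_i$ with $\agsp_i\in\PAP$ and $B_i\in\Br$, one has
\[\AGSP(\ket v\otimes\ket r)=\sum_i (\agsp_i\ket v)\otimes(B_i\ket r)\in(\PAP\V)\otimes\Hr,\]
where $\PAP\V$ is interpreted as the linear span of $\{\agsp\ket v:\agsp\in\PAP,\ket v\in\V\}$. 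Hence $\AGSP(\V\otimes\Hr)\subseteq(\PAP\V)\otimes\Hr$.

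To close the argument I would invoke monotonicity of overlap under subspace inclusion: if $\W_1\subseteq\W_2$ then $\P_{\W_1}\le\P_{\W_2}$ and so $\onto{\W_1}{\GS}\le\onto{\W_2}{\GS}$, which gives a corresponding inequality on error ratios. Applied to $\W_1=\AGSP(\V\otimes\Hr)$ and $\W_2=(\PAP\V)\otimes\Hr$, this shows that $(\PAP\V)\otimes\Hr$ covers $\GS$ with error ratio at most $\shrink\cdot\erra$, which by the bipartite definition is precisely the claimed bound.

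There is no serious obstacle here — the only detail to be careful about is the interpretation of $\PAP\V$ as a linear subspace and the verification that $\AGSP(\V\otimes\Hr)$ sits inside $(\PAP\V)\otimes\Hr$ under that interpretation. Everything else is an immediate appeal to lemma \ref{sharplemma} together with the monotonicity of overlap under inclusion.
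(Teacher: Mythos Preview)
Your argument is correct and is exactly the intended derivation: the paper states this as an immediate corollary of lemma \ref{sharplemma} without proof, and the natural unpacking is precisely to pick the $\shrink$-AGSP $\AGSP\in\PAP\otimes\Br$ guaranteed by the PAP definition, apply lemma \ref{sharplemma} to $\V\otimes\Hr$, and use the inclusion $\AGSP(\V\otimes\Hr)\subseteq(\PAP\V)\otimes\Hr$ together with monotonicity of overlap. There is nothing to add.
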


\subsection{Applying the bootstrap \cite{arad_improved_2012,arad_area_2013,arad_rigorous_2017}}
The {bootstrapping argument} \cite{arad_improved_2012,arad_area_2013,arad_rigorous_2017} proves the existence of a subspace $\V\subsp\Hl$ with small dimension and non-negligible overlap with the target space $\GS\subsp\Hl\otimes\Hr$. The argument combines a method for reducing the entanglement of a subspace with one for increasing overlap with the target space (i.e., an AGSP) in such a way that $\vdim\V$ does not increase when concatenating the operations. 

To offset the dimension growth from the AGSP, the entanglement reduction needs to decrease the entanglement by an factor $R$, which means decreasing the overlap by a factor $\Theta(R)$ using the dimension reduction procedure of \cite{arad_rigorous_2017} (appendix \ref{dimred}). One therefore has to apply the $(\shrink,R)$-AGSP in the low-overlap regime $\mu=c/R$. If we used the error bound $\delta'=\shrink/\mu^2$ of \cite{arad_rigorous_2017} then we would need $\shrink<\mu^2=(c/R)^{2}$ to have any bound on the post-AGSP error, hence requiring a bound of the form $R^2\shrink<\tilde c$ on the parameter tradeoff for the AGSP. In contrast, lemma \ref{sharplemma} weakens this requirement to $\shrink=\mu=cR$. More precisely we will use:
\begin{corollary}\label{sharpapplication}
	Let $\AGSP$ be a $\shrink$-AGSP with target space $\GS\subsp\H$, and suppose $\V\subsp\H$ $\mu$-overlaps onto $\GS$ with $\mu\ge\shrink$. Then $\V'=\AGSP(\V)$ has overlap $\mu'=1/2$ onto $\GS$.
\end{corollary}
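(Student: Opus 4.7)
The plan is to apply Lemma~\ref{sharplemma} as a black box and then translate the resulting bound on the post-AGSP error ratio $\erra'$ into the claimed bound on the post-AGSP overlap $\onto{\V'}{\Z}$. The hypothesis $\mu\ge\shrink$ is set up precisely so that the shrinking factor from Lemma~\ref{sharplemma} cancels the $1/\mu$ appearing in the initial error ratio.

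First I would verify the hypotheses of Lemma~\ref{sharplemma}: since $\mu\ge\shrink$ and the conclusion is vacuous when $\shrink=0$, we may assume $\mu>0$, so $\V$ covers $\Z$ and the lemma applies. Next I would compute the initial error ratio $\erra=(1-\mu)/\mu$ and feed it in to get $\erra'\le\shrink\cdot\erra=(\shrink/\mu)(1-\mu)$. The hypothesis $\mu\ge\shrink$ collapses this to $\erra'\le1-\mu\le1$. Finally, translating back from error ratio to overlap via $\mu'=1/(1+\erra')$ and $\erra'\le1-\mu$ yields $\mu'\ge1/(2-\mu)\ge1/2$, which is the claim.

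There is no real obstacle: once Lemma~\ref{sharplemma} is in hand with its sharp linear-in-$\shrink$ factor, the corollary is a two-line arithmetic manipulation, and the only thing to be careful about is checking that $\V'$ actually covers $\Z$ so the overlap/error-ratio correspondence is well-defined---but this is already part of the conclusion of Lemma~\ref{sharplemma}. The point of recording the statement is to isolate where the sharpness is used: the literature bound $\delta'\le\shrink/\mu^2$ would require $\shrink\lesssim\mu^2$ to produce any non-trivial $\mu'$, whereas the sharp bound permits the linear trade-off $\shrink\le\mu$ that the bootstrap argument of Section~\ref{EB} needs in order to go through under the OTR assumption $R\shrink\le1/2$.
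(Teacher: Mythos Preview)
Your proof is correct and essentially identical to the paper's: both apply Lemma~\ref{sharplemma} to the error ratio $\erra=(1-\mu)/\mu$, use $\mu\ge\shrink$ to conclude $\erra'\le1$, and convert back to $\mu'=\frac1{1+\erra'}\ge\frac12$. The only cosmetic difference is that the paper drops the factor $1-\mu$ up front (bounding $\erra\le1/\mu$ and hence $\erra'\le\shrink/\mu\le1$), whereas you keep it and obtain the slightly sharper intermediate bound $\mu'\ge1/(2-\mu)$ before discarding it.
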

\begin{proof}
	$\V$ has error ratio $\erra=\frac{1-\mu}\mu\le\frac1\mu$, so $\V'$ has error ratio $\erra'\le\shrink/\mu\le1$ by lemma \ref{sharplemma}. This corresponds to overlap $\mu'=\frac1{\erra'+1}\ge1/2$.
\end{proof}

The following lemma is proven following the overall argument of \cite{arad_rigorous_2017} proposition 2 and combining it with the sharp error reduction bound in the form of corollary \ref{sharpapplication} to change the condition from a bound on $R^C\shrink$ to one on $R\shrink$. In the following $x\lesssim y$ means $x=O(y\vee1)$ where $\vee$ denotes the maximum.
\begin{lemma}\label{start}
	Let $\GS\subsp\Hl\otimes\Hr$ be a subspace with degeneracy $\vdim\GS=D$. If there exists a $(\shrink,R)$-AGSP $\AGSP\in\B(\Hl)\otimes\B(\Hr)$ with target space $\GS$ and parameters such that
	\begin{equation}\shrink\cdot R\le1/32,\label{shrinkmu}\end{equation}
then there exists a left $\frac1{32R}$-overlapping space $\V\subsp\Hl$ onto $\GS$ such that $\vdim\V\plusbounded D\log R$. 
It follows that there exists $\V''$ of dimension $\vdim{\V''}\lesssim DR^2\log R$ which is left $\shrink$-viable for $\GS$.

\end{lemma}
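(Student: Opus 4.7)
The plan is to invoke the standard \emph{bootstrapping} strategy of \cite{arad_improved_2012,arad_rigorous_2017}: alternate between applying the $\shrink$-PAP $\PAP\subsp\Bl$ associated to the given $(\shrink,R)$-AGSP (which multiplies left-dimension by at most $R$ while boosting overlap onto $\GS$) and the dimension reduction procedure of appendix \ref{dimred} (which shrinks dimension at the cost of overlap). The key enabler that lets us run the bootstrap under the hypothesis $\shrink R\le 1/32$---rather than under the weaker $R^C\shrink\le c$ of the literature---is corollary \ref{sharpapplication}: whenever a left subspace $\V$ has overlap $\ge\shrink$ onto $\GS$, a single application of the PAP produces a space with overlap at least $1/2$.

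Starting from the trivial left subspace $\V_0=\Hl$, which $1$-covers $\GS$, I iteratively define $\V_{i+1}$ from $\V_i$ (maintained at overlap $\ge 1/(32R)$) in two steps: (a) form $\PAP\V_i\subsp\Hl$, which has $\vdim{\PAP\V_i}\le R\vdim{\V_i}$ and overlap $\ge 1/2$ by corollary \ref{sharpapplication}, since $1/(32R)\ge\shrink$ by hypothesis; (b) apply the degenerate-case dimension reduction of appendix \ref{dimred} to $\PAP\V_i$ with target overlap $1/(32R)$, which shrinks dimension by roughly the overlap ratio $(1/(32R))/(1/2)=1/(16R)$ modulo an additive $O(D\log R)$ term coming from the $D$-dimensional target. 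This yields a recursion of the form $\vdim{\V_{i+1}}\lesssim \tfrac{1}{16}\vdim{\V_i}+D\log R$, which converges geometrically to a fixed point of order $D\log R$; after $O(\log\vdim\Hl)$ iterations one obtains the first claim.

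For the second claim, take the resulting $\V$ of dimension $\lesssim D\log R$ and overlap $\ge 1/(32R)$ and apply $\PAP$ twice more. The first application brings overlap to $\ge 1/2$ and dimension to $\lesssim RD\log R$ (by corollary \ref{sharpapplication}); the second reduces the error ratio by a factor $\shrink$ via the bipartite corollary of lemma \ref{sharplemma}, so from $\erra\le 1$ we get $\erra'\le\shrink$ and hence viability error $\delta'\le\erra'\le\shrink$. The resulting $\V''$ has $\vdim{\V''}\lesssim R^2 D\log R$ and is $\shrink$-viable for $\GS$, as required. The only point that needs real care is checking that the degenerate-case dimension reduction of appendix \ref{dimred} has the expected form---something like $d'\lesssim D\log(1/\mu')+d\,\mu'/\mu$---so that the recursion closes with fixed point $O(D\log R)$ and the constants work out precisely at the threshold $R\shrink\le 1/32$. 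Beyond that, the argument is a direct transcription of \cite{arad_rigorous_2017} proposition 2, with corollary \ref{sharpapplication} substituted for the weaker literature error reduction.
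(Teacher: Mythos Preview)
Your proposal is correct and follows essentially the same approach as the paper: alternate the PAP (via corollary~\ref{sharpapplication}) with the dimension reduction of corollary~\ref{probmeth}, observe that one round contracts the dimension by a constant factor modulo an additive $O(D\log R)$, and then apply $\PAP$ twice more for the $\shrink$-viability claim. The only difference is packaging: the paper replaces your explicit iteration by a one-line minimality argument (take $\V$ of minimal dimension among left $\tfrac{1}{32R}$-overlapping spaces, show the combined step would produce a strictly smaller one unless $V\lesssim D\log R$), which avoids any reference to $\vdim\Hl$ or a number of rounds. One small arithmetic slip: the factor $8$ in corollary~\ref{probmeth} means the contraction is $\tfrac12$, not $\tfrac1{16}$ (i.e., $8\cdot RV\cdot\tfrac{1/(32R)}{1/2}=V/2$); this does not affect the argument since $\tfrac12<1$ still gives geometric convergence to the $O(D\log R)$ fixed point.
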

\begin{proof}
	Let $\V$ be a left $\nu=\frac1{32R}$-overlapping space onto $\GS$ whose dimension $V$ is minimal with respect to this property. 
Let $\PAP$ be the $\shrink$-PAP of dimension $R$ associated to the $(\shrink,R)$-AGSP $\AGSP$, and let $\V'=\PAP\V$ so that $V'=\vdim{\V'}\le RV$. $\shrink\le\nu$ by assumption \eqref{shrinkmu}, so corollary \ref{sharpapplication} yields that $\V'$ is $1/2$-overlapping onto $\GS$. 

By corollary \ref{probmeth} there exists $\Y'\subsp\V'$ which is left $\nu=\frac1{32R}$-overlapping onto $\GS$ and has dimension at most $V/2+O(D\log R\vee\log V)$ since $8V'\cdot\frac{1/(32R)}{1/2}\le V/2$. By minimality of $\V$ we have that $V\le V/2+O(D\log R\vee\log V)$, and rearranging yields the result about $\V$.

The last remark follows by taking $\V''=\PAP^2\V=\PAP\V'$. Then $\V''$ covers $\GS$ with error ratio $\erra''\le\shrink$ by lemma \ref{sharplemma} since $\V'$ has $\erra'=1$, and this upper-bounds the viability error.
\end{proof}

\subsection{Subspace overlap $\to$ entanglement of vectors}

The following lemma relates the entanglement of individual ground states to $\delta$-viability. 
\begin{lemma}\label{tailbound}
	Let $\GS\subsp\Hl\otimes\Hr$ and suppose there exists a $\delta$-viable space $\mc V\subset\Hl$ of dimension $V$ for $\GS$. Pick any state $\ket\psi\in\sphere(\GS)$ and write the Schmidt decomposition $\sum_i\sqrt{\lambda_i}\ket{x_i}\ket{y_x}\in\sphere(\GS)$ with non-increasing coefficients. Then we have the tail bound
	\[\sum_{i=V+1}^{\vdim\Hl}\lambda_i\le\sqrt\delta.\]
\end{lemma}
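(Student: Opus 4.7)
The plan is to prove in fact the slightly stronger bound $\sum_{i=V+1}^{\vdim\Hl}\lambda_i \le \delta$; since viability errors satisfy $\delta\in[0,1]$ one has $\delta\le\sqrt\delta$, which gives the claim as stated.

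First I would unfold the hypothesis that $\V$ is a $\delta$-viable subspace of $\Hl$ of dimension $V$ for $\GS$. By definition the overlap of $\V\otimes\Hr$ onto $\GS$ is at least $1-\delta$, so in particular for the chosen $\ket\psi\in\sphere(\GS)$,
\[\bra\psi(\P_\V\otimes\id)\ket\psi\ge 1-\delta.\]
The left reduced state is $\rho^\psi_\lft=\sum_i\lambda_i\ket{x_i}\bra{x_i}$, so this rearranges to $\opn{Tr}(\P_\V\,\rho^\psi_\lft)\ge 1-\delta$.

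Next I would invoke Ky Fan's maximum principle: for any rank-$V$ orthogonal projection $P$ on $\Hl$ and any positive semidefinite operator $A$ with non-increasing eigenvalues $\lambda_1\ge\lambda_2\ge\cdots$, one has $\opn{Tr}(PA)\le\sum_{i=1}^V\lambda_i$. Applying this with $P=\P_\V$ and $A=\rho^\psi_\lft$ (whose eigenvalues are precisely the squared Schmidt coefficients, already in non-increasing order by assumption) yields $\sum_{i=1}^V\lambda_i\ge 1-\delta$, hence
\[\sum_{i=V+1}^{\vdim\Hl}\lambda_i=1-\sum_{i=1}^V\lambda_i\le\delta\le\sqrt\delta.\]

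There is no substantive obstacle; the only non-definitional ingredient is Ky Fan's bound, which follows in one line by expanding $\opn{Tr}(\P_\V\,\rho^\psi_\lft)=\sum_i\lambda_i\bra{x_i}\P_\V\ket{x_i}$, noting that $0\le\bra{x_i}\P_\V\ket{x_i}\le 1$, and using $\sum_i\bra{x_i}\P_\V\ket{x_i}=\opn{Tr}\P_\V=V$, so the weighted sum is maximized by placing unit weight on the $V$ largest $\lambda_i$. (Equivalently, one can write $\ket{\psi_\V}:=(\P_\V\otimes\id)\ket\psi$, note that it has left Schmidt rank $\le V$ and satisfies $\|\ket\psi-\ket{\psi_\V}\|^2\le\delta$, and conclude by Eckart–Young.)
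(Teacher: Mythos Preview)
Your proof is correct and in fact establishes the stronger bound $\sum_{i>V}\lambda_i\le\delta$ rather than $\sqrt\delta$. The paper takes a different, lossier route: it chooses a normalized $\ket\phi$ supported on $\V\otimes\Hr$ with $|\bracket\psi\phi|^2\ge1-\delta$, passes to the reduced states $\rho_\psi,\rho_\phi$ on $\Hl$, and bounds $\tfrac12\|\rho_\psi-\rho_\phi\|_1\le\sqrt{1-|\bracket\psi\phi|^2}\le\sqrt\delta$ via the standard fidelity--trace-distance inequality for pure states (this step is where the square root enters); it then combines Weyl's inequalities with $\opn{rank}\rho_\phi\le V$ to bound the tail of the spectrum of $\rho_\psi$ by the positive part of the spectrum of $\rho_\psi-\rho_\phi$. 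Your argument bypasses the approximating state entirely by working directly with $\opn{Tr}(\P_\V\rho^\psi_\lft)$ and Ky Fan's maximum principle, which is both more elementary and quantitatively sharp. The paper's detour through trace distance is not needed here, and your improved constant would slightly tighten the downstream entropy estimate in lemma~\ref{frversion}.
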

\begin{proof}
	Let $\ket\phi\in\sphere(\GS)$ such that $\bracket{\psi}{\phi}^2\ge1-\delta$, and let $\rho_\psi$ and $\rho_\phi$ be the reduced density matrices on $\Hl$ so that $\lambda_i=\lambda_i^\psi$ are the eigenvalues of $\rho_\psi$. Then, since the trace distance contracts under the partial trace:
	\[\frac12\|\rho_\psi-\rho_\phi\|_1\le\frac12\big\|\ket\psi\bra\psi-\ket\phi\bra\phi\big\|_1=\sqrt{1-\bracket{\psi}{\phi}^2}\le\sqrt\delta,\]
	Let $d\rho=\rho_\psi-\rho_\phi$ and call its non-increasing eigenvalues (not all positive) $\lambda_i^{d\rho}$ and let $\lambda^\phi_i$ be the non-increasing eigenvalues of $\phi$. For $V+1\le i\le\vdim\Hl$, Weyl's inequalities imply $\lambda_{i}^\psi\le\lambda^\phi_{V+1}+\lambda_{i-V}^{d\rho}=\lambda_{i-V}^{d\rho}$.
	Thus $\sum_{i>V}\lambda_{i}\le\sum_j(\lambda_j^{d\rho})_+=\frac12\|d\rho\|_1\le\sqrt\delta$ where $(x)_+=x\vee0$ is the positive part and the middle equality is because $\opn{tr}(d\rho)=0$.
\end{proof}

\subsection{Proof of proposition \ref{mainres}}\label{frsection}

In the case of a frustrated Hamiltonian the AGSP contruction involves a spectral \emph{truncation} of the Hamiltonian on either side of a cut, incurring an error in the target space of the AGSP. We first prove a version of proposition \ref{mainres} which is applicable to the frustrated case by allowing the target space to be approximate. We then specialize to the case of an exact target space to obtain proposition \ref{mainres}.

Say that subspaces $\tilde\GS,\GS\subsp\H$ are $\delta$-close ($\tilde\Z\approx_\delta\Z$) if $\tilde\Z\parallel_{1-\delta}\Z$.

\begin{lemma}\label{frversion}
	Let $\GS$ with degeneracy $\vdim\GS=D$ be a subspace of bipartite space $\H=\Hl\otimes\Hr$. Let $\tilde\GS_1,\tilde\GS_2,\ldots\subsp\H$ be a sequence of subspaces such that $\tilde\GS_n\approx_{\delta_n}\GS$ where $\delta_1,\delta_2,\ldots$ is a sequence such that $\sum_{n=0}^\infty n\sqrt{\delta_n}=O(1)$.

Let $R\shrink\le1/2$ and suppose there exists a sequence $\AGSP_1,\AGSP_2,\ldots$ such that $\AGSP_n$ is an $(\shrink^n,R^n)$-AGSP for target space $\tilde\Z_n$. Then,
\begin{equation}\max_{\ket\psi\in\sphere(\GS)}\shann(\rho^\psi_\lft)\le(1.01+c_\delta)\log D+O(\log R)\quad\text{where}\quad c_\delta=\sum_{n=1}^\infty\sqrt{\delta_n}.\label{simplified}\end{equation}
\end{lemma}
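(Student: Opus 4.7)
}
The plan is to use Lemma \ref{start} to build a family of small-dimensional left subspaces covering $\GS$, extract Schmidt tail bounds on any $\ket\psi\in\sphere(\GS)$ via Lemma \ref{tailbound}, and then sum block contributions to bound the entropy $\shann(\rho^\psi_\lft)$.

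Fix $n_0$ to be an absolute constant with $(R\shrink)^{n_0}\le1/32$, possibly enlarging it further later. For each $n\ge n_0$, apply Lemma \ref{start} to the $(\shrink^n,R^n)$-AGSP $\AGSP_n$ (whose target $\tilde\GS_n$ has dimension $D$, since $\tilde\GS_n\parallel_{1-\delta_n}\GS$ forces equal dimensions): this yields $\V_n\subsp\Hl$ with $V_n:=\vdim{\V_n}\lesssim DR^{2n}n\log R$ that is left $\shrink^n$-viable for $\tilde\GS_n$. By the triangle inequality for maximum principal angles together with $\tilde\GS_n\approx_{\delta_n}\GS$, the space $\V_n$ is $\delta'_n$-viable for $\GS$ with $\sqrt{\delta'_n}\le\sqrt{\shrink^n}+\sqrt{\delta_n}$. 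Lemma \ref{tailbound} then gives, for any $\ket\psi\in\sphere(\GS)$ with non-increasing Schmidt coefficients $\lambda_1,\lambda_2,\ldots$,
\[t_n:=\sum_{i>V_n}\lambda_i\le\sqrt{\delta'_n}\le\sqrt{\shrink^n}+\sqrt{\delta_n}.\]

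Split the Schmidt distribution into blocks $I_{n_0}=\{1,\ldots,V_{n_0}\}$ and $I_n=\{V_{n-1}+1,\ldots,V_n\}$ for $n>n_0$, with masses $p_n:=\sum_{i\in I_n}\lambda_i\le t_{n-1}$ (and $p_{n_0}\le1$). The standard conditional-uniform bound yields
\[\shann(\rho^\psi_\lft)\le\sum_{n\ge n_0}p_n\log V_n+\Big(-\sum_{n\ge n_0}p_n\log p_n\Big).\]
The $n=n_0$ term contributes $\log V_{n_0}=\log D+O(\log R)$. For $n>n_0$, writing $\log V_n=\log D+O(n\log R)$, the $\log D$-coefficient contributed by those blocks is at most $\sum_{n>n_0}\sqrt{\delta'_{n-1}}\le\sqrt{\shrink}^{n_0}/(1-\sqrt{\shrink})+c_\delta$, which is made $\le0.01+c_\delta$ by enlarging the absolute constant $n_0$; the $\log R$ contribution from those blocks sums to $O(\log R)$ using the geometric decay of $\sqrt{\shrink}^n$ and the hypothesis $\sum_n n\sqrt{\delta_n}=O(1)$; and the entropy-of-masses term is $O(1)$ by the same two decay properties.

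The main technical obstacle is pinning the coefficient of $\log D$ down to $1.01$: this relies on starting the block decomposition at a sufficiently large absolute constant $n_0$ (rather than $n=1$), on the sharpness of Lemma \ref{sharplemma} (used inside Lemma \ref{start} to keep $V_n\lesssim DR^{2n}n\log R$ rather than $DR^{Cn}$ for some larger $C$), and on the viability composition $\sqrt{\delta'_n}\le\sqrt{\shrink^n}+\sqrt{\delta_n}$ from the principal-angle triangle inequality.
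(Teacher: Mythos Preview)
Your proposal is correct and follows essentially the same route as the paper: apply Lemma \ref{start} to each $\AGSP_n$ for $n\ge n_0$ (the paper takes $n_0=m\ge5$ and eventually $m=17$), transfer the viability from $\tilde\GS_n$ to $\GS$ via a principal-angle triangle inequality (the paper cites \cite{arad_rigorous_2017} lemma 3 for the same $(\sqrt{\shrink^n}+\sqrt{\delta_n})^2$ bound), extract Schmidt tail bounds using Lemma \ref{tailbound}, and finish with the block entropy decomposition. The only cosmetic differences are that the paper replaces $V_n$ by the cleaner monotone upper bound $CDR^{3n}$ before forming the blocks, and that it states explicitly $\shrink\le1/2$ (from $R\ge1$) when bounding $\shrink^{n_0/2}/(1-\shrink^{1/2})\le0.01$ at $n_0=17$; you should make that dependence on $\shrink\le1/2$ explicit as well.
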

\begin{proof}
For any $m=5,6\ldots$ we show that $S(\rho^\psi_{\Hl})$ is bounded by
\begin{equation}\label{eqm}(1+\epsilon_m+c_\delta)\log D+O(m\log R),\quad\text{where}\quad\epsilon_m=\frac{\shrink^{m/2}}{1-\shrink^{1/2}}.\end{equation}
\eqref{simplified} then follows by taking $m=17$ since that and $\shrink\le1/2$ yield $\epsilon_n\le0.01$.

Applying lemma \ref{start} to $\PAP_n$ yields a left $\shrink^n$-viable space for $\tilde\GS$ for each $n\ge m$ since $R^n\shrink^n\le1/32$. The lemma implies that $\vdim{\V_n}\lesssim DR^{2n}\log(R^n)$ and hence $\vdim{\V_n}\le CDR^{3n}$ for a constant $C>0$.
	$\V_n$ is $(\shrink^{n/2}+\delta^{n/2})^2$-viable for $\GS$ by the proof of \cite{arad_rigorous_2017} lemma 3.
By lemma \ref{tailbound} the Schmidt coefficients of any state $\ket\psi\in\sphere(\GS)$ satisfy $\sum_{i>C DR^{3n}}\lambda_i\le\shrink^{\frac{n}2}+\sqrt{\delta_n}$ for each $n\ge5$. 

Let $I_0=\{1,2,\ldots,CD\cdot R^{3m}\}$, $I_1,\ldots,I_{m-1}=\emptyset$, and $I_n=\NN\cap(CD\cdot R^{3n},CD\cdot R^{3(n+1)}]$ for $n\ge m$. By the standard decomposition \cite{arad_improved_2012} of the Shannon entropy described in lemma \ref{shannonentropy} (appendix \ref{thestandarddec}),
\begin{align*}\shann(\Lambda_i)&\le\log(C D R^{3m})+\sum_{n=m}^\infty(\shrink^{n/2}+\sqrt{\delta_n})\log(CDR^{3n+3})+\sum_{n=m}^\infty h(\shrink^{n/2}+\sqrt{\delta_n})\\&=(1+\epsilon_m)\log D+O(m\log R)+\sum_{n=m}^\infty h(\shrink^{n/2}+\sqrt{\delta_n}).\end{align*}
We finalize by bounding the rightmost sum. Since $h$ is increasing on $[0,1/e]$, we can bound $h(\shrink^{n/2}+\sqrt{\delta_n})$ by $h(2^{-\frac n2}+\sqrt{\delta_n})$. This in turn is bounded by
\[h(2^{-\frac n2}+\sqrt{\delta_n})\le(2^{-\frac n2}+\sqrt{\delta_n})\log(2^{\frac n2})\le h(2^{-\frac n2})+ n\sqrt{\delta_n}.\]
So $\sum_n h(\shrink^{n/2}+\sqrt{\delta_n})=O(1)$. This establishes \eqref{eqm}.
\end{proof}
The coefficient $1.01$ in lemma \ref{frversion} and proposition \ref{mainres} can be replaced by $1+\epsilon$ for any fixed $\epsilon>0$ by taking $m\propto\log(1/\epsilon)$. The implicit constant of $O(\log R)$ then depends logarithmically on $1/\epsilon$.
\begin{proof}[Proof of proposition \ref{mainres}]
Given AGSP $\AGSP$ with $R\shrink\le1/2$ apply lemma \ref{frversion} to the sequence of AGSPs $\AGSP_n=\AGSP^n$, each with the exact target space $\tilde\GS_n=\GS$ such that we can take $\delta_n=0$. \end{proof}

\newpage
\section{Alternative proof of sharp error reduction}
\label{liftingversion}


Let $\Z,\V\subsp\H$ be subspaces such that 
$\P_\Z\P_\V\P_\Z\opge\mu\P_\Z$ (i.e, $\V\supsp_\mu\Z$). Lemmas 1 and 2 of \cite{arad_rigorous_2017} state that for every $\ket z\in\Z$ there exists $\ket v\in\V$ with norm at most $\|v\|\le\mu^{-1}\|z\|$ such that $\P_\Z\ket v=\ket z$. The alternative proof of the error reduction lemma \ref{sharplemma} relies on noticing that this statement can be improved quadratically, i.e., we can replace $\mu^{-1}$ with $\mu^{-1/2}$.
\subsection{Quadratically improved lifting lemma}
\begin{definition}
Let $\Z,\V\subsp\H$ be subspaces such that $\V$ covers $\Z$. Define the \emp{lifting operator} from $\Z$ to $\V$ as $\liftofrom\V\Z=\tofrom\V\Z(\ON\Z\P_\V\IN\Z)^{-1}$.
\end{definition}
\begin{lemma}\label{liftlemma}
Given subspaces $\Z,\V\subsp\H$ such that $\V\supsp_\mu\Z$ with $\mu>0$, the lifting operator $\Z\to\V$ satisfies:
\begin{enumerate}
	\item $\P_\Z\circ\liftofrom\V\Z\ket z=\ket z$ for any $\ket z\in\Z$\qquad \hfill\textup{(lifting property)},
		\item \label{opnorm}$\|\liftofrom\V\Z\|\le\mu^{-1/2}$,
	\end{enumerate}
\end{lemma}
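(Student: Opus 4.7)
The plan is to set $M := \tofrom\V\Z$, viewed as a linear map $\Z \to \V$, with adjoint $M^\dag = \tofrom\Z\V$. The operator appearing inside the inverse in the definition of $\liftofrom\V\Z$ unfolds as $\ON\Z\P_\V\IN\Z = \ON\Z\IN\V\ON\V\IN\Z = M^\dag M$, and the hypothesis $\V\supsp_\mu\Z$ reads exactly $M^\dag M \opge \mu\id_\Z$, so $M^\dag M$ is a positive invertible operator on $\Z$ and $\liftofrom\V\Z = M(M^\dag M)^{-1}$ is well-defined as a map $\Z\to\V$.

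The lifting property is then immediate: for any $\ket z\in\Z$,
\[\P_\Z\liftofrom\V\Z\ket z = \IN\Z\ON\Z\IN\V M(M^\dag M)^{-1}\ket z = \IN\Z M^\dag M(M^\dag M)^{-1}\ket z = \ket z,\]
after the usual identification of $\Z$ with its image under $\IN\Z$.

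For the operator norm bound, the key is to avoid the naive submultiplicative estimate $\|M(M^\dag M)^{-1}\| \le \|M\|\cdot\|(M^\dag M)^{-1}\| \le \mu^{-1}$, which would only recover the literature bound and miss the quadratic improvement. Instead I would invoke the $C^*$-identity $\|A\|^2 = \|A^\dag A\|$ with $A = \liftofrom\V\Z$:
\[\|\liftofrom\V\Z\|^2 = \|(M^\dag M)^{-1}M^\dag M(M^\dag M)^{-1}\| = \|(M^\dag M)^{-1}\| \le \mu^{-1},\]
where the final inequality is the operator-monotone consequence of $M^\dag M\opge\mu\id_\Z$. Taking square roots gives $\|\liftofrom\V\Z\|\le\mu^{-1/2}$.

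There is not really a serious obstacle. The lemma is a compact statement about a Moore--Penrose-type right inverse, and the only conceptual content is (i) recognizing that the parenthesized operator is exactly $M^\dag M$, so that $\liftofrom\V\Z = M(M^\dag M)^{-1}$ takes a familiar form, and (ii) computing its norm by squaring via $\|A\|^2 = \|A^\dag A\|$ rather than by a triangle-inequality-style split. Step (ii) is exactly what produces the quadratic sharpening from $\mu^{-1}$ to $\mu^{-1/2}$, which drives the improvement in lemma \ref{sharplemma} and, downstream, the clean $R\shrink\le 1/2$ condition of proposition \ref{mainres}.
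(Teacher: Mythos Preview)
Your proof is correct. The identification of the parenthesized operator as $M^\dag M$ (with your convention $M=\tofrom\V\Z$, which is the adjoint of the paper's $M$) and the resulting expression $\liftofrom\V\Z=M(M^\dag M)^{-1}$ match the paper exactly, and the lifting property follows the same way.

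The only difference is in how the norm bound is obtained. The paper writes the polar decomposition $\tofrom\Z\V=SV^\dag$ with $S\opge0$ on $\Z$ and $V$ an isometry, deduces $S^2=MM^\dag\opge\mu\id_\Z$, hence $S\opge\sqrt\mu\,\id_\Z$, and concludes $\|\liftofrom\V\Z\|=\|VS^{-1}\|\le\mu^{-1/2}$. You instead square via the $C^*$-identity $\|A\|^2=\|A^\dag A\|$ and observe that $A^\dag A=(M^\dag M)^{-1}$ directly. Your route is marginally more economical since it avoids invoking the polar decomposition altogether; the paper's route makes the isometry factor explicit, which some readers may find more geometric. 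Both are one-line arguments and yield the same sharpening from $\mu^{-1}$ to $\mu^{-1/2}$.
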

\begin{proof}
	The restricted projection $M=\tofrom\Z\V$ is surjective since $\V$ covers $\Z$, so $M^\dag(MM^\dag)^{-1}=\liftofrom\V\Z$ is a well-defined right-inverse\footnote{This right-inverse is a special case of the Moore-Penrose pseudo-inverse, but its role is not analogous to the pseudoinverse in the proof of \cite{arad_rigorous_2017} lemma 6, which was a pseudoinverse of the \emph{AGSP}.} of $M$. This is the lifting property. For the norm bound we write the polar decomposition $\tofrom\Z\V=SV^\dag$ where $S$ is a positive operator on $\Z$ and $V^\dag$ is the adjoint of an isometry $V:\Z\to\V$ (again using that $M$ is surjective). Since $\V\supsp_\mu\Z$ we have $\mu\id_\Z\ople MM^\dag= SV^\dag VS=S^2$ which implies that $S\opge\sqrt\mu\id_\Z$. Then $\|\liftofrom\V\Z\|=\|VS^{-1}\|\le\mu^{-1/2}$.
\end{proof}

We also write $\liftofrom\V\Z$ in the same way when extending its codomain and viewing it as a map $\Z\to\H$.
By Pythagoras' theorem, $\|z\|^2+\|\P_{\Z^\perp}\liftofrom\V\Z\ket z\|^2=\|\liftofrom\V\Z\ket z\|^2\le\mu^{-1}\|z\|^2$.  Since $\erra=\mu^{-1}-1$, rearranging yields:
\begin{corollary}\label{heightcor}
	Let $\Z,\V\subsp\H$ be such that $\V$ covers $\Z$ with error ratio $\erra$. Then for any $\ket z\in\Z$,
	\begin{equation*}\label{perpdec}\liftofrom\V\Z\ket z=\ket z+\P_{\Z^\perp}\liftofrom\V\Z\ket z\qquad\text{where}\qquad\|\P_{\Z^\perp}\liftofrom\V\Z\|\le\sqrt\erra.\end{equation*}
\end{corollary}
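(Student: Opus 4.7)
The plan is to read off this corollary directly from lemma \ref{liftlemma}, which already contains all the quantitative content needed. The decomposition part is essentially a restatement: writing any vector as its projection onto $\Z$ plus its projection onto $\Z^\perp$, and using the lifting property $\P_\Z\circ\liftofrom\V\Z\ket z=\ket z$, we immediately get $\liftofrom\V\Z\ket z=\ket z+\P_{\Z^\perp}\liftofrom\V\Z\ket z$. No further work is needed for that half.

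For the norm bound on $\P_{\Z^\perp}\liftofrom\V\Z$, the plan is to apply Pythagoras to the orthogonal decomposition above. This gives
\[\|\ket z\|^2+\|\P_{\Z^\perp}\liftofrom\V\Z\ket z\|^2=\|\liftofrom\V\Z\ket z\|^2.\]
Lemma \ref{liftlemma} item \ref{opnorm} then bounds the right-hand side by $\mu^{-1}\|z\|^2$, so subtracting $\|z\|^2$ from both sides yields $\|\P_{\Z^\perp}\liftofrom\V\Z\ket z\|^2\le(\mu^{-1}-1)\|z\|^2=\erra\|z\|^2$, using the identity $\erra=\delta/\mu=(1-\mu)/\mu=\mu^{-1}-1$. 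Since this holds for every $\ket z\in\Z$, taking the supremum over unit $\ket z$ gives the operator norm bound $\|\P_{\Z^\perp}\liftofrom\V\Z\|\le\sqrt\erra$.

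There is no real obstacle here; both claims are immediate unpackings of the already-established lifting lemma. The only thing to be mindful of is that the operator $\liftofrom\V\Z$ is being viewed with codomain $\H$ rather than $\V$ (as remarked in the text immediately preceding the corollary), so that writing $\P_{\Z^\perp}\liftofrom\V\Z$ and applying Pythagoras inside $\H$ are both well-defined.
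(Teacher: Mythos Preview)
Your proof is correct and follows essentially the same approach as the paper: apply Pythagoras to the orthogonal decomposition coming from the lifting property, bound $\|\liftofrom\V\Z\ket z\|^2$ by $\mu^{-1}\|z\|^2$ via lemma~\ref{liftlemma}, and use $\erra=\mu^{-1}-1$ to conclude.
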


The alternative proof of lemma \ref{sharplemma} can now be finalized essentially as in the proof of \cite{arad_rigorous_2017} lemma 6:
\begin{proof}[Finishing the alternative proof of lemma \ref{sharplemma}]
	Write the AGSP as $\agsp_\Z\oplus\agsp_{\Z^\perp}$ as in observation \ref{directsum}. 
	Given an arbitrary unit vector $\ket z\in\Z$ pick $\ket{v'}=\AGSP\circ\liftofrom\V\Z\circ\agsp_\Z^{-1}\ket{z}\in\AGSP\V$. It suffices to show that $\bracket{{z}}{v'}^2\ge\mu'\|v'\|^2$ where $\mu'=\frac1{1+\shrink\erra}$: Applying corollary \ref{heightcor} to $\agsp_\Z^{-1}\ket z$ we have the orthogonal decomposition
	\[\ket{v'}=\ket{z}+\ket{h'}\quad\text{where}\quad\ket{h'}=\agsp_{\Z^\perp}(\P_{\Z^\perp}\liftofrom\V\Z)\agsp_\Z^{-1}\ket{z}.\]
where $\|h'\|\le\|\agsp_{\Z^\perp}\|\cdot\|\P_{\Z^\perp}\liftofrom\V\Z\|\le\sqrt{\shrink\erra}$.  
Then $\|v'\|^2\le1+\shrink\erra$ by Pythagoras', so $\bracket{{z}}{v'}^2/\|v'\|^2=1/\|v'\|^2\ge\mu'$.
\end{proof}


\section{Acknowledgements}
The author thanks Peter Shor and Jonathan Kelner for inspiring discussions, and he thanks Anurag Anshu and David Gosset for insightful dicussions about their proof \cite{anshu_entanglement_2019} of a 2D subvolume law.

\begin{appendices}
	\section{}
\subsection*{The AGSP of \cite{anshu_entanglement_2019} has the same parameters in the degenerate case}\label{2d_gen}
The spectral bound of the AGSP is based on the coarse-grained DL operator \cite{aharonov_detectability_2009,arad_improved_2012} (into bands $H_i$ of width $4t=\tilde O(L^{1/3}\gamma^{-2/3})$) and its analysis in \cite{anshu_simple_2016} which is explicitly independent of degeneracy. A modification makes the overlaps of the coarse-grained projectors smaller by a factor 2. This modification was justified in lemma 3.1, a reduction which is also valid in the degenerate case. \cite{anshu_entanglement_2019} then replaced the product of a set of $m=\tilde O(L^{1/3}\gamma^{-1/6})$ coarse-grained projectors $Q_i$ near the cut by a polynomial in the $m$ Hamiltonians $H_i$ corresponding to the $m$ disjoint bands, and the analysis bounds the approximation error in operator norm. This analysis is made in an eigenbasis for $H_i$ on each band and is agnostic to the global ground state (or space). Hence the shrinking factor is independent of the degeneracy. As for the entanglement rank of the AGSP it is bounded by a combinatorial argument (theorem 5.1) which does not depend on the spectral properties of $H$, so this is again valid for degenerate Hamiltonians.

\section{}

\subsection{Standard entropy bound \cite{arad_improved_2012} from partial sums}
\label{thestandarddec}
A bound on the Shannon entropy of a probability distribution can be obtained through a dyadic decomposition by following the argument of \cite{arad_improved_2012} lemma III.3. Given a sequence $\Lambda=(\lambda_1,\lambda_2,\ldots)\in[0,1]^{\NN}$ write the Shannon entropy $\shann(\Lambda)=\sum_ih(\lambda_i)$ where $h(x)=x\log(x^{-1})$.
\begin{claim}[\cite{arad_improved_2012,arad_area_2013}]\label{shannonentropy}
Let $\Lambda=(\lambda_1,\lambda_2,\ldots)\in[0,1]^{\NN}$ be a sequence with $\sum_i\lambda_i\le 1$ and write $\Sigma_I=\sum_{i\in I}\lambda_i\le1$ for $I\subset\NN$. Let $I_0,I_1,\ldots$ be a partition on $\NN$ such that $\Sigma_{I_n}\le \gamma_n$ for some sequence of $\gamma_n\in[0,1]$.  If $|I_n|\ge3$ for each $n$, then
\[\shann(\Lambda_i)\le\log|I_0|+\sum_{n=1}^\infty\gamma_n\log(|I_n|)+\sum_{i=1}^\infty h(\gamma_n).\] 
\end{claim}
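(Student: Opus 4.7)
The plan is to decompose the entropy along the partition, $\shann(\Lambda)=\sum_{n\ge 0}\sum_{i\in I_n}h(\lambda_i)$, and bound each block-sum in terms of its total mass $\Sigma_{I_n}\le\gamma_n$ and its cardinality $|I_n|$, treating the leading block as the special case $n=0$ with effective $\gamma_0=1$ and $h(\gamma_0)=0$.

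For each block I would apply Jensen's inequality (concavity of $\log$) to the normalized sub-probability distribution $(\lambda_i/\Sigma_{I_n})_{i\in I_n}$. The resulting block bound takes the form $\sum_{i\in I_n}h(\lambda_i)\le \Sigma_{I_n}\log|I_n|+h(\Sigma_{I_n})$, which still depends on the unknown $\Sigma_{I_n}$. The next step is to replace $\Sigma_{I_n}$ by its upper bound $\gamma_n$; this is legitimate provided the map $s\mapsto s\log|I_n|+h(s)$ is nondecreasing on $[0,\gamma_n]$. A one-line derivative check (the derivative equals $\log(|I_n|/s)-1$) shows that this monotonicity holds throughout $[0,|I_n|/e]$. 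The hypothesis $|I_n|\ge 3>e$, combined with $\gamma_n\le 1$, ensures $\gamma_n\le 1\le |I_n|/e$, so the replacement is valid and gives a block bound of $\gamma_n\log|I_n|+h(\gamma_n)$. Summing over $n\ge 1$ together with the $n=0$ contribution $\log|I_0|$ yields the stated inequality.

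The main (and essentially only) obstacle is this monotonicity step and the role of the $|I_n|\ge 3$ hypothesis. Without it, for a block of size $1$ or $2$ we have $|I_n|/e<1$, so $s\log|I_n|+h(s)$ strictly decreases on part of $[0,1]$ and the naive substitution $\Sigma_{I_n}\to\gamma_n$ is not valid; one would instead have to weaken the bound to the unconditional maximum $|I_n|/e$ attained at $s=|I_n|/e$. Everything else is routine bookkeeping in the spirit of \cite{arad_improved_2012} Lemma III.3.
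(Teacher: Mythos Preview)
Your proposal is correct and essentially the same as the paper's proof: your block bound $\Sigma_I\log|I|+h(\Sigma_I)$ is identically $|I|\,h(\Sigma_I/|I|)$, which is exactly what the paper obtains by Jensen applied to the concave function $h$, and your monotonicity check on $s\mapsto s\log|I|+h(s)$ for $s\in[0,|I|/e]$ is the same as the paper's use of the fact that $h$ is increasing on $[0,1/e]$ (just rescaled by $|I|$). The only difference is packaging---you phrase the first step as the entropy grouping identity plus $\shann(p)\le\log|I|$, the paper phrases it as Jensen on $h$---but the resulting inequalities and the role of the hypothesis $|I_n|\ge3$ coincide.
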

\begin{proof}
	Since $h$ is concave Jensen's inequality states that for any set of indices $I$, $\frac1{|I|}\sum_{i\in I}h(\lambda_i)\le h(\tfrac1{|I|}\sum_{i\in I}\lambda_i)$. Rearranging yields:
	\begin{equation}\label{Jensens}\sum_{i\in I}h(\lambda_i)\le |I|\cdot h(\nofrac{\Sigma_I}{|I|}).\end{equation}
	$h$ is increasing on $[0,1/e]$, so if $|I|\ge3$ and $\gamma\le1$ is an upper bound on $\Sigma_I$, then $\sum_{i\in I}h(\lambda_i)\le |I|h(\gamma/|I|)=\gamma\log(|I|\gamma^{-1})$. Apply this bound for each $n=1,2,\ldots$. We also have in particular that $\sum_{i\in I}h(\lambda_i)\le\log|I|$. Apply this for $I_0$.
\end{proof}

\subsection{Dimension reduction}\label{dimred}
Having analyzed the AGSP which achieves the improvement of the overlap we now recall a standard tool for entanglement reduction. 
\begin{lemma}[\cite{arad_rigorous_2017}]\label{randlem}
Let $\GS\subsp\Hl\otimes\Hr$ be a subspace with dimension $D$ and let $\W\subsp\Hl$ be left $\mu$-overlapping onto $\GS$ with $\vdim\W=W$. Then a Haar-uniformly random subspace $\V\subsp\W$ of dimension $V\le W$ is left $\nu$-overlapping onto $\GS$ with probability at least $1-\eta$ where
\[\nu=\frac{V}{8W}\cdot\mu\aand\eta=(1+2\nu^{-1/2})^D We^{-V/16}.\]
\end{lemma}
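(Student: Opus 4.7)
The plan is to reduce the statement, via an $\epsilon$-net on $\sphere(\GS)$, to a single-vector concentration bound for the Haar-random projector $\P_\V$ tested against a fixed unit vector in $\W$. The whole argument is a net argument plus a Chernoff/Beta tail bound, with the $\mu$-overlap hypothesis used only to transfer the problem from $\Hl$ into $\W$.

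First I would reformulate the conclusion: $\V$ is left $\nu$-overlapping onto $\GS$ iff, for every unit $\ket z \in \GS$, one has $\bra z(\P_\V\otimes\id_\Hr)\ket z\ge \nu$, equivalently $\opn{tr}(\P_\V\rho_z)\ge\nu$ where $\rho_z=\opn{tr}_\Hr\ket z\bra z$. Since $\|\P_\V\otimes\id\|\le 1$, the functional $\ket z\mapsto \bra z(\P_\V\otimes\id)\ket z$ is $2$-Lipschitz on $\sphere(\GS)$, so picking an $\epsilon$-net $\mc N\subset\sphere(\GS)$ with $\epsilon=\sqrt\nu$ and $|\mc N|\le(1+2/\epsilon)^D=(1+2\nu^{-1/2})^D$ reduces the task to showing $\opn{tr}(\P_\V\rho_{z_0})\ge 2\nu$ for every $\ket{z_0}\in\mc N$ (up to constants; I would absorb the slack by shrinking $\nu$ by a constant factor).

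Next I would transfer the problem from $\Hl$ onto $\W$. Because $\V\subseteq\W$ and $\W$ is $\mu$-overlapping onto $\GS$, one has
\[\opn{tr}(\P_\V\rho_{z_0})=\opn{tr}(\P_\V\P_\W\rho_{z_0}\P_\W)\ge \mu\cdot\opn{tr}(\P_\V\tilde\rho_{z_0}),\]
where $\tilde\rho_{z_0}=\P_\W\rho_{z_0}\P_\W/\opn{tr}(\P_\W\rho_{z_0})$ is a density matrix on $\W$. So the remaining task is, for an \emph{arbitrary fixed} density matrix $\tilde\rho$ on $\W$, to show $\opn{tr}(\P_\V\tilde\rho)\ge V/(4W)$ with failure probability $\le We^{-V/16}$. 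Diagonalising $\tilde\rho=\sum_{i=1}^W\sigma_i\ket{e_i}\bra{e_i}$ gives $\opn{tr}(\P_\V\tilde\rho)=\sum_i\sigma_i\|\P_\V\ket{e_i}\|^2$, so it suffices that $\|\P_\V\ket{e_i}\|^2\ge V/(4W)$ for every $i$; a union bound over the $W$ eigenvectors is where the factor $W$ in $\eta$ comes from.

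The main obstacle, and the technical heart of the argument, is the single-vector lower tail: for a fixed unit $\ket u\in\W$ and a Haar-uniform $V$-dimensional $\V\subset\W$, show
\[\Pr\bigl[\|\P_\V\ket u\|^2<\tfrac{V}{4W}\bigr]\le e^{-V/16}.\]
By unitary invariance I can take $\ket u=\ket 1$, so $\|\P_\V\ket 1\|^2$ is the $(1,1)$ entry of $\P_\V$, which is $\opn{Beta}(V,W-V)$-distributed with mean $V/W$; equivalently it equals $\|G_1\|^2/(\|G_1\|^2+\|G_2\|^2)$ with $G_1\in\CC^V$ and $G_2\in\CC^{W-V}$ independent standard complex Gaussians. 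A Chernoff/sub-exponential bound on $\|G_1\|^2$ (lower tail) and $\|G_2\|^2$ (upper tail) then yields the desired exponential-in-$V$ decay; choosing the constants so that the multiplicative deviation is a factor $4$ delivers $e^{-V/16}$. Combining this with the union bound over eigenvectors and the union bound over $\mc N$ assembles the claimed failure probability $\eta=(1+2\nu^{-1/2})^D\,W\,e^{-V/16}$.
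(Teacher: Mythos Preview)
The paper does not prove this lemma; it is quoted verbatim from \cite{arad_rigorous_2017} and used as a black box (the text immediately moves on to the bound \eqref{ttbound} and corollary \ref{probmeth}). So there is no in-paper proof to compare against. Your outline is the standard route and is essentially what \cite{arad_rigorous_2017} does: an $\epsilon$-net on $\sphere(\GS)$, transfer into $\W$ via the $\mu$-overlap hypothesis, diagonalize the reduced state on $\W$, and a single-vector Beta/Chernoff lower tail for $\|\P_\V\ket u\|^2$, followed by two union bounds.

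One genuine slip to fix, though: your net step does not close as written. The quadratic form $\ket z\mapsto\bra z(\P_\V\otimes\id)\ket z$ is $2$-Lipschitz, but with $\epsilon=\sqrt\nu$ that gives an \emph{additive} loss $2\sqrt\nu$, which for small $\nu$ swamps the target value $\nu$; this cannot be absorbed by ``shrinking $\nu$ by a constant factor''. The correct move is to work with the norm $\ket z\mapsto\|(\P_\V\otimes\id)\ket z\|$, which is $1$-Lipschitz: require $\bra{z_0}(\P_\V\otimes\id)\ket{z_0}\ge4\nu$ (i.e.\ norm $\ge2\sqrt\nu$) on a $\sqrt\nu$-net, and then for arbitrary $\ket z$ within distance $\sqrt\nu$ of some net point the norm is $\ge\sqrt\nu$, hence the quadratic form is $\ge\nu$. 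This is exactly what produces the net size $(1+2\nu^{-1/2})^D$ rather than $(1+c\,\nu^{-1})^D$. With this correction your chain of constants lines up: $4\nu=\mu\cdot\tfrac{V}{2W}$ matches $\nu=\tfrac{V}{8W}\mu$, so it suffices to show $\|\P_\V\ket{e_i}\|^2\ge \tfrac{V}{2W}$ for each eigenvector, and the single-vector Beta tail with a factor-$2$ deviation delivers $e^{-V/16}$. The rest of your outline then assembles to $\eta$ exactly.
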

Since $1+2x\le 3x$ for $x>1$ (and in particular for $x=\nu^{-1/2}\ge\sqrt8$) we have the bound on the error probability:
\begin{equation}\eta<(9/\nu)^{D/2}We^{-V/16}.\label{ttbound}\end{equation}
Applying the probabilistic method we obtain:
\begin{corollary}\label{probmeth}
Let $\W\subsp\Hl$ of dimension $W$ be left $\mu$-overlapping onto $\GS\subsp\Hl\otimes\Hr$ with $\vdim\GS=D$. For any $0<\nu\le\mu$ there exists a subspace $\V\subsp\W$ which is left $\nu$-overlapping onto $\GS$ and has dimension
\begin{equation}\label{thisV}V=\Big\lceil 8\Big(W\cdot\frac\nu\mu\vee\big(D\log(9/\nu)+2\log W\big)\Big)\Big\rceil\wedge W.\end{equation}
\end{corollary}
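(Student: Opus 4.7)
The plan is a direct application of the probabilistic method to Lemma \ref{randlem} via the simplified failure bound \eqref{ttbound}. First I would dispatch the edge case in which the ceiling in \eqref{thisV} exceeds $W$, so the cap $V \wedge W = W$ is active: here I take $\V := \W$ itself, which is left $\mu$-overlapping (hence left $\nu$-overlapping since $\nu \le \mu$) by hypothesis, and we are done.

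In the nontrivial regime the ceiling is at most $W$, so Lemma \ref{randlem} applies and produces a Haar-random $\V \subset \W$ of dimension $V$ that is left $\nu_0$-overlapping onto $\GS$ with $\nu_0 := V\mu/(8W)$, except with failure probability $\eta$. The first branch $8W\nu/\mu$ of the maximum defining $V$ is chosen precisely so that $\nu_0 \ge \nu$, so it suffices to exhibit a success at overlap level $\nu_0$. Plugging into \eqref{ttbound} I get $\eta < (9/\nu_0)^{D/2} W e^{-V/16} \le (9/\nu)^{D/2} W e^{-V/16}$, the last step using $\nu_0 \ge \nu$. The second branch $8(D\log(9/\nu) + 2\log W)$ of the maximum is exactly the condition $V \ge 8D\log(9/\nu) + 16\log W$ needed to make the right-hand side $\le 1$. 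Hence $\eta < 1$ and the probabilistic method yields the desired $\V$.

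This is essentially bookkeeping built on Lemma \ref{randlem} and I expect no substantive obstacle; the two branches of the $\vee$ in \eqref{thisV} correspond cleanly to the two conditions (overlap $\ge \nu$, failure probability $<1$) that the probabilistic method demands. The only minor point to be careful about is that the inequality $\eta<1$ must be \emph{strict}, which is inherited from the strict inequality in \eqref{ttbound} itself (valid whenever $\nu_0<1$, which always holds in our regime since $\nu \le \mu \le 1$).
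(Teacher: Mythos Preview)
Your proposal is correct and follows essentially the same route as the paper: dispose of the edge case $V=W$ by taking $\V=\W$, then apply Lemma~\ref{randlem} and the bound \eqref{ttbound}, with the two branches of the $\vee$ in \eqref{thisV} ensuring respectively that the random subspace has overlap $\nu_0\ge\nu$ and that the failure probability is strictly below $1$. The only cosmetic remark is that the strictness of \eqref{ttbound} actually relies on $\nu_0\le1/8$ (from $1+2\nu_0^{-1/2}\le3\nu_0^{-1/2}$), not merely $\nu_0<1$; this is automatic here since $\nu_0=V\mu/(8W)\le1/8$.
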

\begin{proof}
	If $V=W$ then $\V=\W$ suffices. Otherwise let $\tilde\nu=\frac{V}{8W}\mu$ be the overlap from lemma \ref{randlem} corresponding to the choice \eqref{thisV} of $V$ and let $\tilde\eta=(9/\nu)^{D/2}We^{-V/16}$. Then $\log \tilde\eta=\tfrac D2\log(9/\nu)+\log W-V/16\le0$ by the choice of $V$.  
	By \eqref{ttbound} the error probability in lemma \ref{randlem} is strictly below $\tilde\eta\le1$ so by the probabilistic method there exists a left $\tilde\nu$-overlapping space. But $\tilde\nu\ge\nu$ which proves the claim.
\end{proof}

	\end{appendices}
\bibliographystyle{alpha}
\bibliography{../bibliography}

\end{document}